\newcommand{\tb}[1]{\textbf{#1}}
\DeclareMathOperator{\Tr}{Tr}
\DeclareMathOperator{\Prob}{Prob}
\theoremstyle{plain}
\theoremstyle{plain}
\newtheorem{lemma}{Lemma}
\theoremstyle{plain}
\theoremstyle{plain}
\theoremstyle{remark}
\theoremstyle{conjecture}
\theoremstyle{observation}
\theoremstyle{definition}
\theoremstyle{corollary}
\theoremstyle{definition}
\theoremstyle{definition}
\theoremstyle{result}
\newtheorem{definition}{Definition}
\theoremstyle{assumption}
\theoremstyle{definition}
\theoremstyle{problem}
\theoremstyle{fact}
\newcommand*{\citen}[1]{%
  \begingroup
    \romannumeral-`\x 
    \setcitestyle{numbers}%
    \cite{#1}%
  \endgroup   
}
\begin{document}

\title{Topological Order and Memory Time in Marginally Self-Correcting Quantum Memory}
\author{Karthik Siva}\email{E-mail: ksiva@caltech.edu}
\author{Beni Yoshida}\email{E-mail: byoshida@perimeterinstitute.ca}
\affiliation{$^*$Institute for Quantum Information and Matter, California Institute of Technology, Pasadena, California 91125, USA}
\affiliation{$^\dagger$Perimeter Institute for Theoretical Physics, Waterloo, Ontario, N2L 2Y5 Canada}
\date{\today}

\begin{abstract}
We examine two proposals for marginally self-correcting quantum memory, the cubic code by Haah and the welded code by Michnicki. In particular, we prove explicitly that they are absent of topological order above zero temperature, as their Gibbs ensembles can be prepared via a short-depth quantum circuit from classical ensembles. Our proof technique naturally gives rise to the notion of free energy associated with excitations. Further, we develop a framework for an ergodic decomposition of Davies generators in CSS codes which enables formal reduction to simpler classical memory problems. We then show that memory time in the welded code is doubly exponential in inverse temperature via the Peierls argument. These results introduce further connections between thermal topological order and self-correction from the viewpoint of free energy and quantum circuit depth. 
\end{abstract}

\maketitle

\section{Introduction}

To create a usable memory system, classical or quantum, one must be able to write, store, and read out information reliably. The main difficulty in this endeavour is the protection of the information written against thermal noise during storage. Therefore, one natural question is the feasibility of self-correcting quantum memory, a device which would correct itself by dissipating energy to the thermal bath~\cite{Dennis02}. Instead of requiring active error-correction, the device would only require initialization of the system in a ground state and a final round of error-correction upon readout. Such a self-correcting quantum memory device could serve as the equivalent hard drive quantum computing technology. Here we define a self-correcting quantum memory as a system with local interactions, non-divergent terms, perturbative stability and diverging quantum memory time at nonzero temperature~\cite{Brell16}. See Ref.~\cite{Brown14_review, Landon-Cardinal15} for recent reviews.

The mechanism by which self-correction could be implemented is inspired by memory systems in classical physics. In classical memory, such as the two-dimensional Ising ferromagnet below a critical temperature, the energy cost associated with the erroneous flip dominates its entropic advantage so that it is likely to flip back and avoid proliferating. Thus a necessary ingredient for self-correcting quantum memory would be that the topologically ordered ground states of a quantum memory system are separated by large energy barriers. These ideas have inspired several proposals for self-correcting quantum memory~\cite{Haah11, Michnicki14, Bravyi13, Brell16, Kim11, Pedrocchi13, Bardyn16}. However, at this time, these are not verified or are not fully self-correcting in light of the aforementioned definition. We refer to these as ``marginally'' self-correcting. Additionally, several no-go results are known~\cite{Bravyi09, Haah10, Landon-Cardinal13, Pastawski15}. Namely, it has been shown that systems with both translation and scale invariance cannot be self-correcting in three dimensions~\cite{Beni11}.

Implementation of codewords as topologically ordered ground states of quantum many-body systems is especially attractive as codewords store logical qubits non-locally with stability against local perturbations. However, presence of topological order in the ground states is not sufficient for self-correcting behaviors. For example, the ground states of the two-dimensional toric code are topologically ordered as ground states are locally indistinguishable and cannot be prepared from product states via a short-depth circuit. Yet, due to point-like mobile excitations associated with string-like logical operators, it is not self-correcting~\cite{Alicki09, Komar16}. On the other hand, the four-dimensional toric code, which is provably a self-correcting quantum memory~\cite{Alicki10}, is also topologically ordered above zero temperature in a sense that its Gibbs ensemble cannot be prepared via a short-depth circuit~\cite{Hastings11}. This raises another important question concerning whether topological order can survive above zero temperature in three spatial dimensions, or in other words, whether Gibbs ensembles of three-dimensional physical systems can be prepared immediately, in light of the definition of thermal topological order proposed by Hastings~\cite{Hastings11}. It is currently not known whether aforementioned proposals of self-correcting quantum memories are topologically ordered at nonzero temperature or not. 

The question of topological order at nonzero temperature is expected to be equivalent to that of the feasibility of self-correcting quantum memory. However, the connection has been established only for simple models such as the two-dimensional and four-dimensional toric code. In this paper, we study static and dynamical properties of marginally self-correcting quantum memory models and further establish the connection between these two notions. We shall mainly focus on two specific examples, the cubic code and the welded code, but our techniques may extend to a larger class of systems. Our work also serves as a non-trivial test of the definition of thermal topological order proposed by Hastings.

\tb{Main results:}
The cubic code is a three-dimensional stabilizer code with fractal-shaped logical operator and $O(\log(n))$ energy barrier. We begin by proving that the cubic code, which breaks scale invariance, is not topologically ordered above zero temperature. More precisely, we show that the thermal Gibbs ensemble of the cubic code can be approximated from some classical ensemble via an $O(\log(n))$-range quantum circuit. (Note that non-constant range is necessary since the approximation error is assessed by trace norm). Further, this result extends to a larger class of quantum codes with fractal shaped logical operators. 

Although the result itself is perhaps not surprising, as evidences of thermal instability of the cubic code have been known, what may be new is the proof technique and also the physical intuition therein. First, we introduce the concept of an imperfect Hamiltonian $H(p)$ where interaction terms are removed from a stabilizer Hamiltonian $H$ with some probability $p$ which is set by the temperature: $p = \frac{2}{e^{\beta}+1}$. We then show that, if there (almost surely) exists a short-depth quantum circuit to disentangle $H(p)$, then the system is not topologically ordered at a given temperature. Second, we observe that the missing terms in an imperfect Hamiltonian $H(p)$ act as sinks for excitations so that by moving excitations to locations of missing terms, they can be eliminated from the system. This observation allows one to identify the Gibbs ensemble approximation problem as evaluations of free energies, enabling us to employ the Peierls argument to heuristically assess the absence and the presence of topological order at nonzero temperature. Finally, we prove that isolated excitations in the cubic code can be eliminated by moving them to nearby sinks. The proof utilizes polynomial representations of translation invariant stabilizer Hamiltonians in a crucial manner. 

Next, we turn our attention to the welded code which is free from string-like logical operators, has $O(n^{2/9})$ energy barrier and breaks translation invariance. We show that quantum memory time of the welded code is upper bounded by $O(\exp(\exp(c\beta)))$ for a fixed temperature. Namely, we show that quantum memory time of the welded code is upper bounded by classical memory time of the Ising model supported on a sparse lattice (Fig.~\ref{fig_sparse-lattice}). We then argue that memory time can be computed from the Peierls argument. Our derivation of memory time is not mathematically rigorous, and we back up our analytical result by numerical simulations of memory time of the sparse Ising model and confirm doubly exponential behavior. 

Again, although the result itself is perhaps not surprising, as relations between the one-dimensional Ising model and the two-dimensional toric code have been noticed, what may be new is the proof technique and also the physical intuition therein. The key idea is to choose a minimal set of Davies generators that ensure ergodicity in the weak-coupling limit since the gap of the Lindblad operator (in other words, the inverse of memory time) is lower bounded by that of a minimal choice. Thermal dynamics under minimal generators can be effectively described by a simpler classical system which enables a formal reduction to a classical problem. We will provide an algorithm to find such minimal generators which applies to arbitrary CSS stabilizer codes. This generalizes the work by Alicki \emph{et al} on Davies generators for the two-dimensional toric code to arbitrary CSS stabilizer codes~\cite{Alicki09}. Applications of this algorithm to the welded code enable reduction to problems concerning the Ising model on a sparse graph. We also reveal that the procedure of the welding has an interesting interpretation as attaching several copies of topological codes along the same gapped boundary. 

\tb{Plan of the paper:} The paper is organized as follows. In section~\ref{sec:topo}, we develop key theoretical tools to study topological order at nonzero temperature. In section~\ref{sec:cube}, we prove that the cubic code is not topologically ordered at nonzero temperature. In section~\ref{sec:decomposition}, we introduce an ergodic decomposition of Davies generators for CSS codes. In section~\ref{sec:weld}, we examine memory time of the welded code and provide numerical results. 

In appendix~\ref{sec:fractal_decomposition}, we present an ergodic decomposition for the cubic code and show that its memory time is upper bounded by classical memory time of two-dimensional classical fractal spin models. In appendix~\ref{sec:boundary}, we present an interpretation of the welding procedure from the viewpoint of anyon condensations on gapped boundaries. In appendix~\ref{sec:LGT}, we discuss generic properties of an imperfect Hamiltonian in three dimensions and show that a three-dimensional imperfect Hamiltonian can be coarse-grained into the form of lattice gauge theories. 

\section{Thermal topological order}\label{sec:topo}


Let us recall the definition of topological order at nonzero temperature due to Hastings~\cite{Hastings11}. Consider a frustration-free commuting Hamiltonian on a $D$-dimensional lattice of size $n=L^{D}$:
\begin{align}
H = - \sum_{j} Q_{j}
\end{align}
where $Q_{j}$ are projectors, $[Q_{i},Q_{j}]=0$ and ground states satisfy $Q_{j}|\psi\rangle=|\psi\rangle$ for all $j$. The Hamiltonian possesses topological order at zero temperature if it requires a unitary circuit with large circuit depth to create ground states. Namely, a ground state $|\psi\rangle$ is said to be $(R,\epsilon)$ trivial if there exists a unitary quantum circuit $U$ with range $R$ such that $| \psi - U \psi_\text{prod}|\leq \epsilon$ for some product state $\psi_\text{prod}$ where $|\ldots|$ represents the trace norm. By a range-$R$ unitary, we mean a unitary quantum circuit $U$ where the circuit depth multiplied by the maximum range of each unitary in the circuit is bounded by some range $R$~\cite{Chen10}. 

Topological order at nonzero temperature is a generalization of this notion to a thermal ensemble. Let $\rho_\beta$ be the Gibbs ensemble of the Hamiltonian $H$ at $\beta = T^{-1}$ with $k_{B}=1$; $\rho_{\beta}=\frac{e^{-\beta H}}{\text{Tr} e^{-\beta H}}$. The system is said to be topologically ordered at temperature $T$ if it requires a large depth quantum circuit to prepare the Gibbs ensemble from a classical ensemble. 

\begin{definition}[Hastings~\cite{Hastings11}]
Let each site $j$ have an additional degree of freedom $\mathcal{K}_j$, defining an enlarged Hilbert space $\mathcal{H}_{j}\otimes \mathcal{K}_{j}$ on each site. A density matrix $\rho$ is said to be $(R,\epsilon)$ trivial if there exists a unitary quantum circuit $U$ with range $R$ such that 
\begin{align}
|\rho_\beta - \Tr_{(\{\mathcal{K}_j\})}(U\rho_\text{cl}U^\dagger)| \leq \epsilon
\end{align}
where $\rho_{cl}$ is a classical state of range $R$, and where the trace is over the added degrees of freedom $\mathcal{K}_{j}$. A classical state of range $R$ is defined to be a thermal ensemble such that $\rho_{cl}=\mathcal{Z}^{-1}\exp(-H_{cl})$ where $H_{cl}$ is a Hamiltonian with range $R$ which is diagonal in a product basis. 
\end{definition}

We shall often focus on a special class of quantum memory models known as CSS stabilizer codes (Hamiltonians). A stabilizer code encodes $k$ logical qubits in the ground state subspace $\mathcal{L}$ of $n$ physical qubits supporting a stabilizer group $\mathcal{S}$. States spanning $\mathcal{L}$ are defined by their invariance under the action of elements of $\mathcal{S}$:
\begin{equation}
\mathcal{L} = \{\ket{\psi}\in (\mathbb{C}^2)^{\otimes n} : s\ket{\psi}=\ket{\psi} \forall s\in\mathcal{S}\}
\end{equation}
$\mathcal{S}$ is an Abelian subgroup of the Pauli group acting on $n$ qubits $\mathcal{P}_n$ defined as
\begin{equation}
\mathcal{P}_n = \langle iI, X_1, Z_1, X_2, Z_2,\dots X_n, Z_n\rangle
\end{equation}
The invariance of $\ket{\psi}$ under the action of $\mathcal{S}$ implies $-I\not\in\mathcal{S}$. Logical operators of the code are given by Pauli operators $P$ such that for the centralizer $\mathcal{C}(\mathcal{S})$, $P  = \mathcal{C}(\mathcal{S})\backslash \mathcal{S}$.
Additionally, one can always choose generators $S_i$ of $\mathcal{S}$ so that $\mathcal{S} = \langle S_1, S_2,\dots S_N\rangle$, and $\mathcal{L}$ is the space of ground states of a Hamiltonian $H = -\sum_{i=1}^N S_i$ with $N \geq n-k$. A CSS code is a stabilizer code for which every generator of $\mathcal{S}$ can be written as a product of $X$-type or $Z$-type stabilizer elements. The corresponding Hamiltonian can take the following special form
\begin{equation}
H = -\sum_{i=1}^{N_X} S^{(X)}_i - \sum_{i=1}^{N_Z} S^{(Z)}_i
\end{equation}
where $S^{(X)}_i$ is a product of solely Pauli $X$ operators and $S^{(Z)}_i$ is a product of solely Pauli $Z$ operators.

\subsection{Imperfect Hamiltonian}\label{sec:topo2}

To study the presence (or the absence) of topological order at nonzero temperature, it is convenient to consider an \emph{imperfect Hamiltonian}. Consider a stabilizer Hamiltonian $H = - \sum_{j=1}^{N} \frac{1+S_{j}}{2}$ where $S_{j}$ are Pauli stabilizer generators. 

\begin{definition}
Let $k_{j}=0,1$ be binary integers for $j=1,\ldots,N$ and define the imperfect Hamiltonian
\begin{align}
H(\vec{k}) = - \sum_{j=1}^{N} k_{j}S_{j}.
\end{align}
The maximally mixed ensemble in the ground state space of $H(\vec{k} )$ is denoted by $\rho(\vec{k})$. Define the free ensemble $\rho_{\text{free}}(\beta)$ by
\begin{equation}
\rho_\text{free}(\beta)=\sum_{\vec{k}} \Prob(\vec{k})\cdot \rho(\vec{k})
\end{equation}
where
\begin{equation}
\Prob(\vec{k}) = \prod_{j} (1-p)^{k_{j}}p^{(1-k_{j})}\qquad p = \frac{2}{e^{\beta}+1}.
\end{equation}
\end{definition}

In other words, we remove interaction terms from the stabilizer Hamiltonian with some probability $p$. In the next subsection, we will discuss the physical intuitions of the removed terms. 

The thermal ensemble $\rho_{\beta}$ can be approximated by the free ensemble $\rho_{\text{free}}(\beta)$ under a certain condition. In general, stabilizer generators $S_{j}$ are not independent since an identity operator can be formed by multiplying $S_{j}$. Let us introduce $N$-component binary vectors, denoted by $\vec{m}^{(i)}$ for $i=1,\ldots,2^{M}$, such that 
\begin{align}
\prod_{j=1}^{N} (S_{j})^{m^{(i)}_{j}}=I.\label{eq:constraint}
\end{align} 
Here we take $\vec{m}^{(1)}$ to be $m^{(1)}_{j}=0$ for all $j$. Note $k=n-(N-M)$.

\begin{definition}
Let $w(\vec{m}^{(i)})$ be the Hamming weight (the number of nonzero entries) of a binary vector $\vec{m}^{(i)}$. A stabilizer Hamiltonian is said to be $(\alpha,\gamma)$-free if there exist positive constants $\alpha > \gamma>0$ such that
\begin{align}
w(\vec{m}^{(i)}) \geq O(L^{\alpha}) \qquad i\not=1
\end{align}
and $M\leq O(L^{\gamma})$. 
\end{definition}

In computing $\rho_{\text{free}}(\beta)$, each stabilizer generator $S_{j}$ was removed randomly with probability $p(\beta)$ as if $S_{j}$ were independent operators. However, Eq.~(\ref{eq:constraint}) implies that $S_{j}$ are correlated with each other. The $(\alpha,\gamma)$-free condition assures that correlations are weak and $S_{j}$ can be treated as independent operators. Note that the cubic code (and quantum fractal codes) are $(2,1)$-free and the two-dimensional toric code is $(2,0)$-free.

\begin{lemma}\label{lemma_remove}
Let $H$ be an $(\alpha,\gamma)$-free stabilizer Hamiltonian. Then
\begin{align}
|\rho(\beta) - \rho_{\text{free}}(\beta)| \leq c\cdot \exp(-c'\cdot L^{\alpha})
\end{align}
where $c,c'>0$ are some constants independent of $\beta,L$.
\end{lemma}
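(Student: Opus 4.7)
The plan is to write the Gibbs ensemble in a form that closely resembles the free ensemble and then bound the residual via concentration of a simple random variable. First, since $[S_i,S_j]=0$, I will factorize the Boltzmann operator as $e^{-\beta H}=\prod_{j}[I+(e^{\beta}-1)\Pi_{j}^{+}]$ with $\Pi_{j}^{+}=(I+S_j)/2$. Expanding the product---which is unambiguous since the $\Pi_j^+$ commute---yields
\begin{equation}
e^{-\beta H}=\sum_{\vec{k}\in\{0,1\}^{N}}(e^{\beta}-1)^{|\vec{k}|}\,\Pi(\vec{k}),\qquad \Pi(\vec{k})=\prod_{j:\,k_j=1}\Pi_{j}^{+}.
\end{equation}
The rank of $\Pi(\vec{k})$ equals $2^{n-|\vec{k}|+d(\vec{k})}$, where $d(\vec{k})$ is the dimension of the subspace of $G:=\mathrm{span}\{\vec{m}^{(i)}\}\subseteq\mathbb{F}_{2}^{N}$ consisting of relations whose support sits inside $\mathrm{supp}(\vec{k})$. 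Writing $\Pi(\vec{k})=2^{n-|\vec{k}|+d(\vec{k})}\rho(\vec{k})$ and matching the resulting coefficients against the Bernoulli weights $\Prob(\vec{k})$ via $1-p=(e^{\beta}-1)/(e^{\beta}+1)$ produces the core identity
\begin{equation}
\rho_{\beta}=\sum_{\vec{k}}\frac{2^{d(\vec{k})}}{\mathbb{E}[2^{d}]}\,\Prob(\vec{k})\,\rho(\vec{k}),
\end{equation}
which collapses onto $\rho_{\text{free}}(\beta)$ exactly when the generators are independent ($d\equiv 0$).

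Second, I will control the reweighting factor $2^{d(\vec{k})}/\mathbb{E}[2^{d}]$. Reinterpreting $2^{d(\vec{k})}$ as the cardinality of $\{\vec{m}\in G:\mathrm{supp}(\vec{m})\subseteq\mathrm{supp}(\vec{k})\}$ and swapping sums delivers the clean identity
\begin{equation}
\mathbb{E}[2^{d}]=\sum_{\vec{m}\in G}(1-p)^{w(\vec{m})}\leq 1+2^{M}(1-p)^{w_{\min}},
\end{equation}
where $w_{\min}:=\min_{\vec{m}\neq 0}w(\vec{m})$. The $(\alpha,\gamma)$-free bounds $M\leq O(L^{\gamma})$ and $w_{\min}\geq O(L^{\alpha})$, together with $\alpha>\gamma$, then yield $\mathbb{E}[2^{d}]-1\leq \exp(-c'\,L^{\alpha})$ for $L$ sufficiently large.

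Finally, I will convert this into a trace-norm estimate. Setting $X(\vec{k}):=2^{d(\vec{k})}\geq 1$ and using that each $\rho(\vec{k})$ has unit trace norm, the triangle inequality yields
\begin{equation}
|\rho_{\beta}-\rho_{\text{free}}(\beta)|\leq \mathbb{E}\!\left|\frac{X}{\mathbb{E}[X]}-1\right|\leq \frac{2\,(\mathbb{E}[X]-1)}{\mathbb{E}[X]},
\end{equation}
where the last step uses the pointwise inequality $X\geq 1$ to conclude $\mathbb{E}|X-\mathbb{E}[X]|\leq 2\,\mathbb{E}[X-1]$. Substituting the estimate from step two closes the argument.

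The hard part will be step two: one needs the union bound over the $2^{M}$ relations in $G$ not to be wasteful. The clean form $\mathbb{E}[2^{d}]=\sum_{\vec{m}}(1-p)^{w(\vec{m})}$ and the hypothesis $\alpha>\gamma$ are precisely what make this work, since the minimum-weight contribution $(1-p)^{L^{\alpha}}$ overwhelms the $2^{L^{\gamma}}$-sized ambient group. A secondary subtlety is that the effective constant $c'$ naively inherits a factor $|\log(1-p)|$ which degenerates as $\beta\to\infty$; this is consistent with the statement because in that limit both $\rho_{\beta}$ and $\rho_{\text{free}}(\beta)$ independently converge to the ground-space mixture $\rho(\vec{1})$, so the low-temperature regime can be handled by a separate direct estimate.
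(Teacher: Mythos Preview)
Your argument is correct and essentially identical to the paper's: both factorize $e^{-\beta H}$ over the commuting stabilizer projectors, identify the coefficient of $\rho(\vec{k})$ as the number of relations supported in $\vec{k}$ (your $2^{d(\vec{k})}$ is precisely the paper's $F(\vec{k})=\Tr\mathcal{M}(\vec{k})$), swap the order of summation to reduce the normalization to $\sum_{\vec{m}\in G}(1-p)^{w(\vec{m})}$, and then bound the $\vec{m}\neq 0$ tail by $2^{M}(1-p)^{L^{\alpha}}$. Your version is in fact slightly more explicit than the paper's about handling the normalization in the final trace-norm step and about the $\beta$-dependence of the effective constant $c'$, both of which the paper glosses over.
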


The above lemma can be generalized to stabilizer Hamiltonians with qudits (spins with $d$ internal states). Namely, for a Hamiltonian $H = -\sum_{j}Q_{j}$ with $Q_{j}=\frac{1}{d}(I + S_{j}+ S_{j}^{2}+\cdots S_{j}^{d-1})$, we set the probability function by $p=\frac{d}{e^{\beta}-1+d}$.

\begin{proof}
The Gibbs ensemble is proportional to 
\begin{equation}
\begin{split}
\frac{e^{-\beta H}}{(e^{\beta}+1)^{N}} &= \prod_{j=1}^{N}\left( (1-p(\beta)) \cdot \frac{S_{j}+I}{2} + p(\beta) \cdot \frac{I}{2} \right)\\
&= \sum_{\vec{k}} \mbox{Prob}(\vec{k})\cdot \mathcal{M}(\vec{k})
\end{split}
\end{equation}
with $\mathcal{M}(\vec{k})=\prod_{j=1}^{N}\left( \frac{S_{j}+I}{2}\right)^{k_{j}}  \left(\frac{I}{2} \right)^{1-k_{j}}$. Here $\mathcal{M}(\vec{k})$ is proportional to $\rho(\vec{k})$ and can be written as $\mathcal{M}(\vec{k})= F(\vec{k}) \rho(\vec{k})$ where $F(\vec{k})= \Tr \mathcal{M}(\vec{k})$. Consider an $N$-component binary vector $\vec{k'}$. We say that $\vec{k'}$ is included in $\vec{k}$ if $k_{j}'\leq k_{j}$ for all $j$, and write $\vec{k'}\subseteq \vec{k}$. Then $\mathcal{M}(\vec{k})$ can be written as
\begin{align}
\mathcal{M}(\vec{k})= \frac{1}{2^{N}}\sum_{ \vec{k'}\subseteq \vec{k} } \prod_{j}
(S_{j})^{k_{j}'}.
\end{align}

Non-zero contributions to $\mbox{Tr}\ \mathcal{M}(\vec{k})$ come from $\vec{k'}$ such that $\prod_{j} (S_{j})^{k_{j}'}=I$ since the trace of non-identity Pauli operators is zero. So, $F(\vec{k})$ is equal to the number of $\vec{k'}\subseteq \vec{k}$ such that $\prod_{j} (S_{j})^{k_{j}'}=I$. Then we have
\begin{align}
\rho_{\beta} \propto \sum_{i=1}^{2^{M}} \sum_{\vec{k} \supseteq \vec{m}^{(i)}}\mbox{Prob}(\vec{k})\cdot \rho(\vec{k}).
\end{align}
We split the above sum into two parts
\begin{align}
\sum_{ \vec{k}}\mbox{Prob}(\vec{k})\cdot \rho(\vec{k}) + \sum_{i=2}^{2^{M}} \sum_{\vec{m}^{(i)} \subseteq \vec{k}}\mbox{Prob}(\vec{k})\cdot \rho(\vec{k}).\label{eq:decomposition}
\end{align}
Since the Hamming weight of $\vec{m}^{(i)}$ for $i>1$ is at least $L^{\alpha}$, we have 
\begin{align}
\sum_{i=2}^{2^M} \sum_{\vec{k} \supseteq \vec{m}^{(i)}}\mbox{Prob}(\vec{k})
\leq  2^{M}\cdot (1-p(\beta))^{L^{\alpha}}
\end{align}
which approaches zero exponentially since $\alpha>\gamma$. The first term in Eq.~(\ref{eq:decomposition}) is $\rho_{\text{free}}(\beta)$, so $\rho_{\beta}\approx \rho_{\text{free}}(\beta)$. This completes the proof.
\end{proof}

\subsection{Cleaning excitations}\label{sec:topo3}

Consider a CSS stabilizer Hamiltonian: $H = -\sum_{j} S_{j}^{(X)} -\sum_{j} S_{j}^{(Z)}$. We hope to transform this Hamiltonian into a classical Hamiltonian with Pauli-$Z$ terms only. A $X$-type stabilizer generator $S_{j}^{(X)}$ is said to be $R$-removable if there exists a Pauli $Z$ operator $V_{j}$ which is exclusively supported inside a box of linear length $R$ surrounding $S_{j}^{(X)}$ and satisfies
\begin{align}
[S_{j}^{(X)},V_{i}]=0\quad (i\not=j) \qquad \{S_{j}^{(X)},V_{j}\}=0.
\end{align}
The operator $V_{j}$ can eliminate an excitation associated with $S_{j}^{(X)}$ since $V_{j}$ can flip the sign of $S_{j}^{(X)}$ without affecting other stabilizer generators.  

\begin{lemma}
If all the $X$-terms $S_{j}^{(X)}$ in a CSS-type Hamiltonian $H$ are $R$-removable, then there exists a range $R$ unitary $U$ such that $UHU^{\dagger}$ is a classical Hamiltonian. 
\end{lemma}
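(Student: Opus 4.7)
The plan is to construct $U$ directly as a product of mutually commuting local Cliffords, one associated with each $X$-type stabilizer. For each $j$, set
\begin{align}
U_j = \frac{1}{\sqrt{2}}\bigl(S_j^{(X)} + V_j\bigr).
\end{align}
Because $S_j^{(X)}$ and $V_j$ are Hermitian Paulis that anticommute (and each squares to $I$), $U_j$ is Hermitian and $U_j^2 = I$, so $U_j$ is a Clifford unitary whose support is contained in the $R$-box hosting $V_j$. A direct Hadamard-like calculation yields $U_j S_j^{(X)} U_j^\dagger = V_j$, i.e.\ $U_j$ ``rotates'' the $X$-type term into the $Z$-type operator $V_j$.

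The key step is to check that the family $\{U_j\}$ is mutually commuting and that each $U_j$ commutes with every $Z$-term. For $i\neq j$, the $R$-removability hypothesis gives $[S_i^{(X)},V_j]=0$; by symmetry $[S_j^{(X)},V_i]=0$; stabilizer generators commute among themselves; and $V_i,V_j$ commute since both are products of Pauli-$Z$. Expanding $[U_i,U_j]$ in these four pieces shows it vanishes. Similarly each $U_j$ commutes with every $Z$-type stabilizer $S_k^{(Z)}$, because $S_j^{(X)}$ does (stabilizers all commute) and $V_j$ does (both are $Z$-type).

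Once these commutation relations are in place, the rest is bookkeeping. Setting $U = \prod_j U_j$, which is well defined because the factors commute, we obtain
\begin{align}
U H U^\dagger = -\sum_j V_j - \sum_j S_j^{(Z)},
\end{align}
a sum of Pauli-$Z$ operators, hence diagonal in the computational basis, that is, classical. To realize $U$ as a quantum circuit of bounded depth, one invokes the standard graph-coloring trick: every $U_j$ is supported in an $R$-box and so overlaps with only $O(R^D)$ others on a $D$-dimensional lattice, so the $\{U_j\}$ partition into $O(R^D)$ color classes of disjoint-support commuting unitaries. This gives a depth-$O(R^D)$ circuit of range-$R$ Clifford gates, i.e.\ a range-$R$ unitary up to lattice-dimension constants.

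The only subtle point I expect to encounter is the verification of the four pairwise commutation conditions above; everything else follows from standard CSS bookkeeping and the observation that the $V_j$ are $Z$-type. The central input is precisely the off-diagonal clause in $R$-removability, namely $[S_i^{(X)},V_j]=0$ for $i\neq j$: without it, the local Hadamard rotations $U_j$ would collide at overlapping supports and fail to assemble into a single global Clifford disentangler.
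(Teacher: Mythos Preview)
Your argument is correct and is essentially the same as the paper's: the paper takes $U=\prod_j \exp\!\bigl(\tfrac{\pi}{4}S_j^{(X)}V_j\bigr)$, and since $(S_j^{(X)}V_j)^2=-I$ this equals $\prod_j \tfrac{1}{\sqrt 2}(I+S_j^{(X)}V_j)$, which differs from your $U_j=\tfrac{1}{\sqrt 2}(S_j^{(X)}+V_j)$ only by left-multiplication by the stabilizer $S_j^{(X)}$. If anything you are more careful than the paper, which simply asserts $S_j^{(X)}\to V_j$, $S_j^{(Z)}\to S_j^{(Z)}$ without spelling out the pairwise commutation checks or the coloring argument for bounded depth.
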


\begin{proof}
Let $V_{j}$ be the removers for $S_{j}^{(X)}$. Consider the following range-$R$ unitary transformation
\begin{align}
U = \prod_{j} \exp\left(\frac{\pi}{4} S_{j}^{(X)}V_{j}\right). 
\end{align} 
Since $S_{j}^{(Z)}\rightarrow S_{j}^{(Z)}$ and $S_{j}^{(X)}\rightarrow V_{j}$, one has
\begin{align}
UHU^{\dagger} = -\sum_{j}S_{j}^{(Z)} - \sum_{j}V_{j}
\end{align}
which is a classical Hamiltonian. 
\end{proof}

Intuitively, the lemma says that if one can eliminate $S_{j}^{(X)}$-excitations by unitary operations acting inside a box of size $R$, then the Hamiltonian can be shown to be $R$-trivial. 



Let us show that two-dimensional toric code is not topologically ordered at nonzero temperature by using this lemma. This statement has been already proven by Hastings, but we provide a slightly different proof. Assume that a vertex operator $A_{v}$ at a vertex $v$ is missing in $H(p)$. Then the missing vertex $v$ behaves as a sink of charge excitations. Namely, for a single isolated charge at arbitrary location, let us move it to the location of missing vertex by applying some appropriate string-like Pauli $Z$ operators. Once it reaches the missing vertex $v$, the charge excitation can disappear. Therefore, vertices with missing vertex terms can absorb charge excitations.

Imagine that we split the entire lattice into a grid of $\sqrt{c\log(L)}\times \sqrt{c\log(L)}$ qubits. Then, for sufficiently large but finite $c$, there almost surely exists at least one missing vertex term per grid as in (Fig.~\ref{fig_sink}). Namely, the probability of having no missing vertex in a grid is $(1-p)^{c\log(L)} = L^{c\log(1-p)}$ which is polynomially vanishing since $\log(1-p)<0$. Thus, one can bring charge excitations in each grid to the sink within the same grid. Let $g$ denote a grid and $n(g)$ denote the number of sinks in $g$. Since there are in total $\frac{L^{2}}{c\log(L)}$ grids, the probability of having at least one missing vertex in each grid $g$ is given as follows:
\begin{equation}
\Prob[n(g) > 0\quad \forall g]  = (1- L^{c\log(1-p)})^{L^2/c\log(L)} \label{eq:toriclb}
\end{equation}
The RHS of Eq.~(\ref{eq:toriclb}) is lower bounded by $1 - \frac{L^{2}}{c\log(L)} L^{c\log(1-p)}$, so if $c > - \frac{2}{\log(1-p)}$, there exists at least one missing vertex in each grid with probability asymptotically approaching to unity. Therefore, the Gibbs ensemble $\rho_{\beta}$ is $(R,\epsilon)$-trivial for $R=\sqrt{c\log(L)}$ and $\epsilon=O(L^{-\delta})$ with some $\delta>0$ in the thermodynamic limit. In other words, one can approximate the Gibbs ensemble via $\sqrt{c\log(L)}$-range circuit. 

By a similar argument, one is able to observe that topologically ordered models with point-like excitations do not retain their topological order at any finite temperature. Such systems include the Levin-Wen model and the Walker-Wang model. In the case of the welded code,\cite{Michnicki14} bifurcations of otherwise point-like excitations are avoided in the presence of missing terms by the fact that the length of solids within the code grow as $O(L^{2/3})$, so by similar argument excitations can always be eliminated within a solid. See appendix~\ref{sec:boundary} for details.


\begin{figure}[htb!]
\centering
\includegraphics[width=0.65\linewidth]{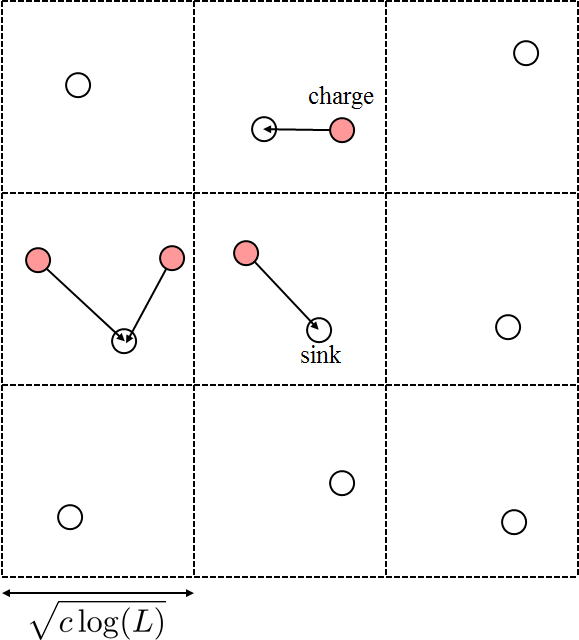}
\caption{White dots represent missing vertex terms which act as sinks of excitations. Red dots represents charge excitations that can be absorbed into a sink. The entire system is split into a grid of $\sqrt{c\log(L)}\times \sqrt{c\log(L)}$ qubits such that there is at least one sink per patch.
} 
\label{fig_sink}
\end{figure}

\subsection{Free energy}\label{sec:topo4}

\begin{figure}[htb!]
\centering
\includegraphics[scale=0.42]{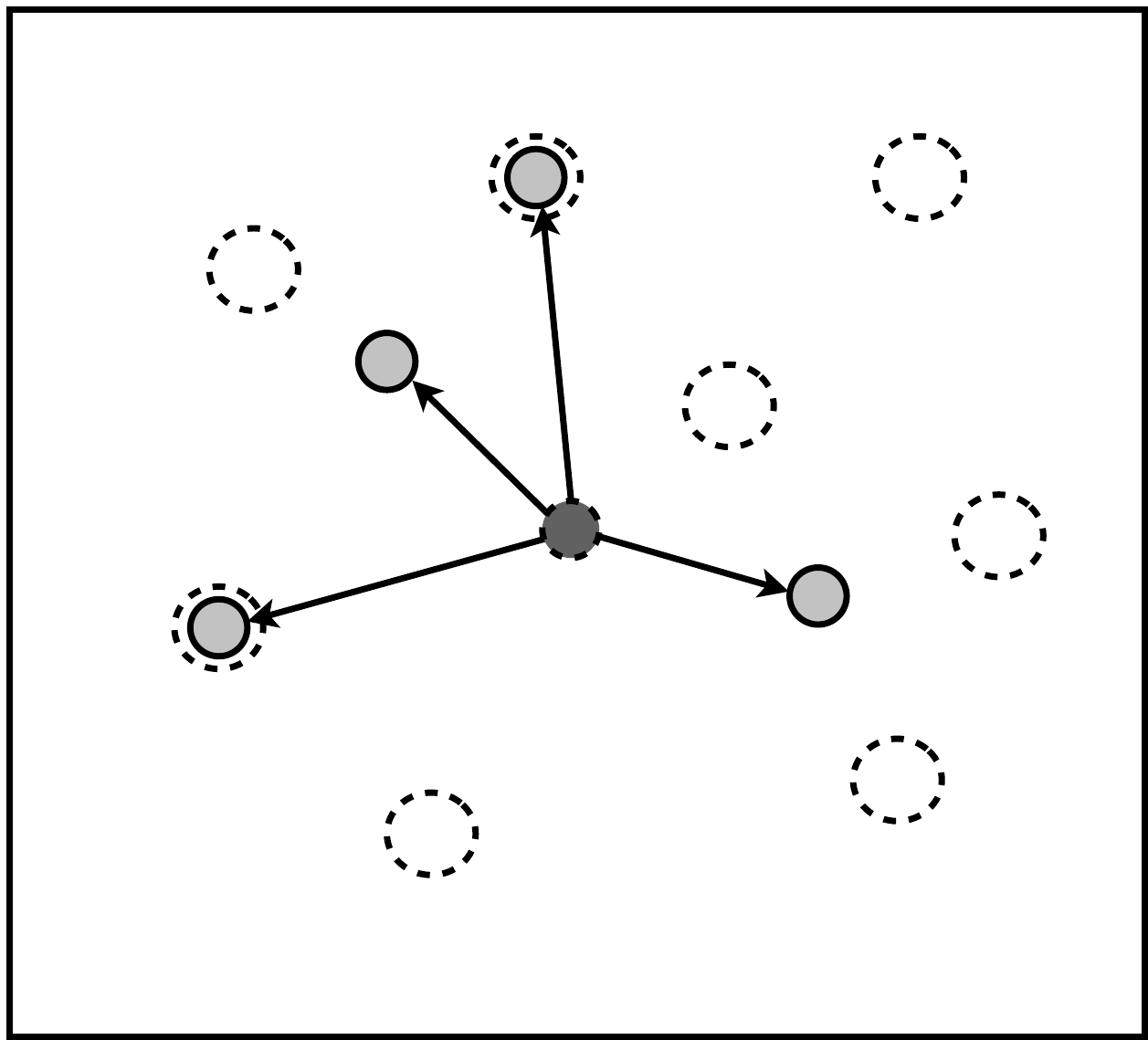}
\caption{An elimination of excitations and the Peierls argument. Grey circles represent excitations that evolved from the isolated single charge at the center. Dotted circles represent sinks of excitations. In this example, two excitations surrounded by sinks can be absorbed but the other two remain. 
} 
\label{fig_Peierls}
\end{figure}

Let us take a short detour and present physical interpretations of the arguments in previous sections. Consider an isolated single excitation which is located inside some block of finite size in the system and randomly distributed sinks of excitations resulting from an imperfect Hamiltonian $H(p)$. If the single excitation can be eliminated by moving it to sinks via unitary transformation acting strictly inside a box, one can construct a local unitary transformation which brings $H(p)$ into a classical Hamiltonian. For simplicity of discussion, we assume that the sinks appear at some finite constant rate $p_{0}>0$. If excitations are particle-like, meaning that their propagation trajectories are deformable lines, then an isolated excitation can be absorbed into some sink after propagating $1/p_{0}$ sites on average. However, when the system has diverging energy barrier and an excitation may split into multiple excitations, as in the cubic code and the welded code, it is not clear if excitations can be absorbed by randomly distributed sinks or not. 

Consider a single isolated excitation and a surrounding box of finite size. Let $\mathcal{R}$ be a distribution of sinks inside the box. We consider a set of all of the possible error configurations $\{\mathcal{C}_{j}\}$ into which the single isolated excitation may evolve. If there exists an excitation configuration $\mathcal{C}$ such that $\mathcal{C} \subseteq \mathcal{R}$, then the isolated excitation can be absorbed via a unitary transformation acting strictly inside the box. Let us consider an error configuration $\mathcal{C}$ such that the total number of excitations is given by $|\mathcal{C}|=E$. The probability for such a configuration that can be absorbed by $\mathcal{R}$ is given by
\begin{align}
{p_{0}}^{E} \sim \exp(-\beta_{0} E)\qquad e^{-\beta_{0}}\equiv p_{0}
\end{align}
We would like to estimate a probability that such an excitation configuration exist from a list of all the possible excitations configurations $\{\mathcal{C}_{j}\}$. While an exact calculation of such a probability is difficult in general, we can crudely approximate it by assuming that the probabilities for excitation configurations to be absorbed by $\mathcal{R}$ are statistically independent events. Let $N_{E}$ be the total number of error configurations such that $|\mathcal{C}|=E$. Then, the total probability for excitation configurations of size $E$ to be absorbed by sinks can be estimated as 
\begin{align}
P_\text{total}(E)\approx N_{E} \exp(-\beta_{0} E) = \exp ( - \beta_{0} F )
\end{align}
where 
\begin{align}
F = E-T_{0}S \qquad S=\log N_{E}. 
\end{align}
See Fig.~\ref{fig_Peierls}. Here we have approximated $P_\text{total}(E)$ by summing each probability $\exp(-\beta_{0} E)$ which is valid when $P_\text{total}(E)$ is small. The exponent $F$ can be viewed as a free energy of excitation configurations with energy $E$, and the approximation is valid for $F>0$.   

The observation above allows us to obtain an easy check for topological order at finite temperature by computing the free energy of excitations. If $F$ is negative for a given $E$, such excitations are likely to be absorbed. If $F$ is positive, such excitations are not likely to be absorbed. This is indeed the essence of the Peierls argument\cite{Peierls36}. Let us apply this argument to two-dimensional Ising model on a square lattice. 
The excitation energy $E$ corresponds to the length of the domain wall of an excitation droplet. The total number of possible configurations can be estimated to be $N_{E}=3^{E}$ by treating it as a random walk. (There are three directions to move on a square lattice). So the free energy of excitation at $E$ is lower bounded by
\begin{align}
F \geq E - \log 3^{E} T. \label{eq:freeenergy}
\end{align}
For $T$ less than some constant, the free energy is positive, and we expect that excitations cannot be absorbed by sinks. However, for large $T$, the free energy becomes negative, implying that such excitations can readily be absorbed. By finding the temperature for which the RHS becomes zero, we are able to obtain an analytical lower bound the transition temperature: $1/\log 3\sim 0.91 \leq T_{c}$.

\section{Cubic code}\label{sec:cube}

In this section, we show that the cubic code is not topologically ordered at nonzero temperature. 

\subsection{Fractal and polynomial}

This subsection provides a quick review of the cubic code and its generalization, the quantum fractal codes. To treat stabilizer quantum codes with fractal-shaped logical operators, it is convenient to introduce the polynomial representation of Pauli operators. See Ref.~\citen{Haah13,Beni13} for detailed discussions. Consider a three-dimensional cubic lattice with $L\times L \times L$ qubits. Each site is labeled as $(i,j,k)$ where $0\leq i,j,k\leq L-1$. Consider a polynomial over $\mathbb{F}_{2}$ fields
\begin{align}
f(x,y,z)= \sum_{i,j,k=0}^{L-1}c_{i,j,k}x^{i}y^{j}z^{k}\qquad c_{i,j,k}=0,1
\end{align}
and corresponding Pauli operators 
\begin{align}
X(f)=\prod_{i,j,k}{(X_{i,j,k})}^{c_{i,j,k}} \quad Z(f)=\prod_{i,j,k}{(Z_{i,j,k})}^{c_{i,j,k}}.
\end{align}
Polynomial representations are useful for dealing with translation invariant systems. For instance, $X(x^{i}f)$ represents a translation of $X(f)$ in the $\hat{x}$ direction by $i$ sites. We impose periodic boundary conditions by setting $x^{L}=y^{L}=z^{L}=1$. To be more concrete, we have $x^{-1}=x^{L-1}$, which implies that $-1$ th site and $L-1$ th site are identical. For simplicity of discussion, we assume $L=2^{m}$ with positive integer $m$. (The system size $L$ affects the number of logical qubits, but discussion on thermal topological order does not depend on choices of $L$).

Quantum fractal codes are defined on a cubic lattice where \emph{two qubits} live at each site of an $L\times L \times L$ cubic lattice. Pauli operators can be represented as $X \left(
\begin{array}{c}
f \\
g
\end{array}
\right)$ and $Z \left(
\begin{array}{c}
f \\
g
\end{array}
\right)$ where polynomials on the upper/lower row represent the first/second qubit at each site. 

Let us now specify the Hamiltonian of quantum fractal codes. For arbitrary polynomials $\alpha(x,y,z)$ and $\beta(x,y,z)$, consider
\begin{align}
Z \left(
\begin{array}{c}
\alpha \\
\beta
\end{array}
\right),\quad X \left(
\begin{array}{c}
\bar{\beta} \\
\bar{\alpha}
\end{array}
\right)\label{eq:canonical}
\end{align}
where $\bar{\alpha}$ and $\bar{\beta}$ are duals of $\alpha$ and $\beta$ obtained by taking $x \rightarrow x^{-1}$, $y\rightarrow y^{-1}$, $z\rightarrow z^{-1}$. The Hamiltonian is
\begin{align}
H = - \sum_{ijk} Z \left(\begin{array}{c} x^{i}y^{j}z^{k} \alpha \\ x^{i}y^{j}z^{k} \beta \end{array} \right) - \sum_{ijk} X \left(\begin{array}{c} x^{i}y^{j}z^{k} \bar{\beta} \\ x^{i}y^{j}z^{k} \bar{\alpha} \end{array} \right).
\end{align}
One can verify that terms in the Hamiltonian commute with each other due to the use of dual polynomials. So this is a stabilizer Hamiltonian. 

\begin{figure*}[htb!]
\centering
\includegraphics[width=0.7\linewidth]{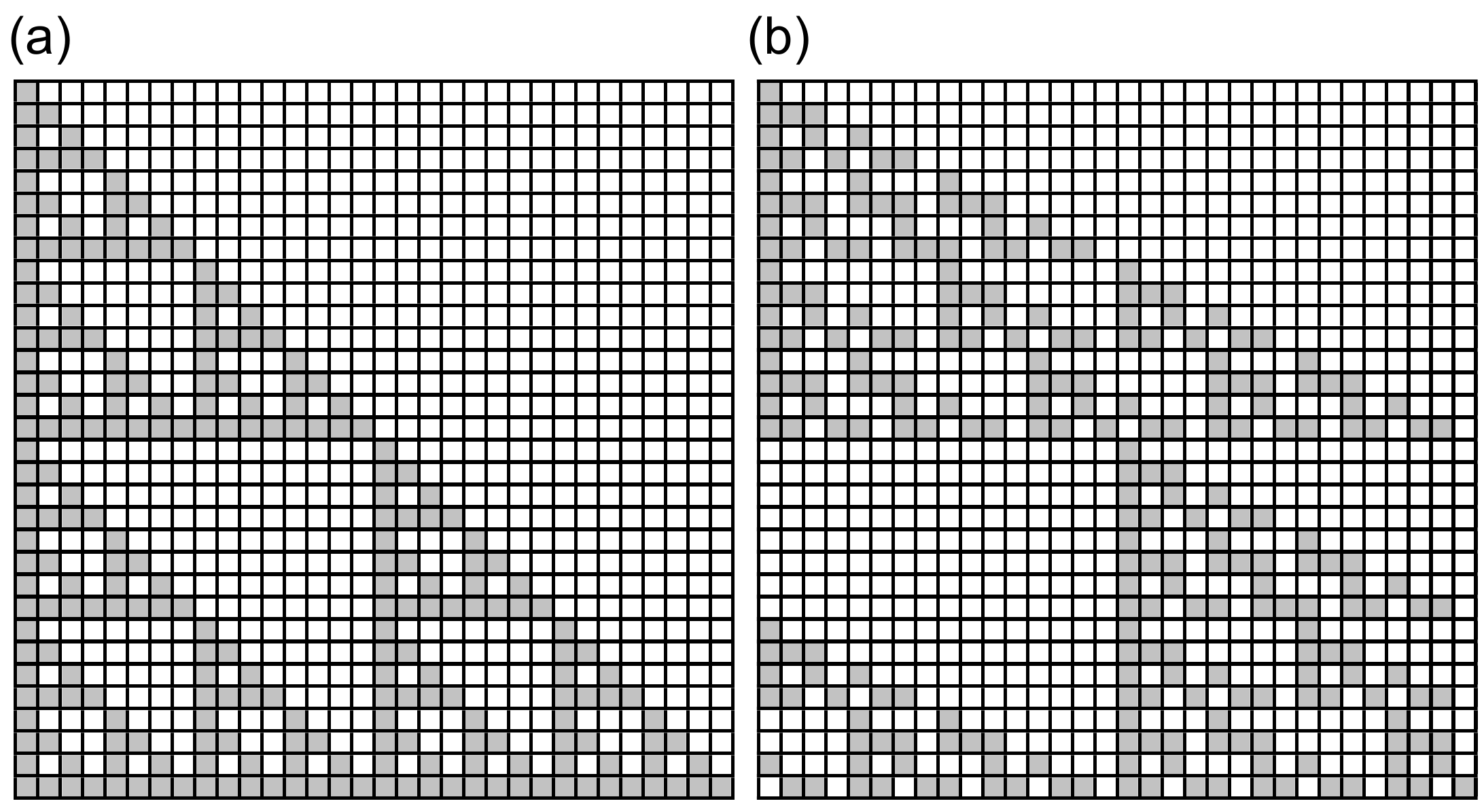}
\caption{(a) The Sierpinski triangle from $f=1+x$. (b) The model from $f=1+x+x^{2}$. 
} 
\label{fig_fractal}
\end{figure*}

In Ref.~\citen{Beni13}, quantum codes with the following polynomials have been studied in detail:
\begin{align}
\alpha = 1 + f(x)y \qquad \beta = 1+ g(x)z. 
\end{align}
Logical operators have fractal shapes which are generated by polynomials $f(x)$ and $g(x)$:
\begin{align}
\tb{f}(x,y) &= 1 + fy + f^{2}y^{2} + \cdots \\ 
\bar{\tb{f}}(x,y) &= 1 + \bar{f}\bar{y} + \bar{f}^{2}\bar{y}^{2} + \cdots\\
\tb{g}(x,z) &= 1 + gz + g^{2}z^{2} + \cdots \\
\bar{\tb{g}}(x,z) &= 1 + \bar{g}\bar{z} + \bar{g}^{2}\bar{z}^{2} + \cdots.
\end{align}
Note $\tb{f}(x,y)$ lives on a $(\hat{x},\hat{y})$-plane while $\tb{g}(x,z)$ lives on a $(\hat{x},\hat{z})$-plane. Here $\tb{f}(x,y)$ and $\tb{g}(x,y)$ have fractal shapes in general as shown in Fig.~\ref{fig_fractal}. Namely, for $f=1+x$, one obtains the well-known Sierpinski triangle while for $f=1+x+x^{2}$, one obtains a fractal shape with Hausdorff dimension $\log{(1+\sqrt{5})}/\log{2}$.

Quantum fractal codes typically have $k=2L$, and there are $2L$ of $Z$-type logical operators and $2L$ of $X$-type logical operators:
\begin{align}
&\ell^{(Z)}_{i}=
Z\left( 
\begin{array}{cc}
0 \\
x^{i}\tb{f}(x,y)
\end{array}
\right)\quad \ r^{(Z)}_{i}=
Z\left( 
\begin{array}{cc}
x^{i}\tb{g}(x,z)  \\
0 
\end{array}
\right)\\
&\ell^{(X)}_{i}=
X\left( 
\begin{array}{cc}
x^{i}\bar{\tb{f}}(x,y) \\
0  
\end{array}
\right)\quad 
r^{(X)}_{i}=
X\left( 
\begin{array}{cc}
0\\
x^{i}\bar{\tb{g}}(x,z)    
\end{array}
\right)
\end{align}
where $i=0,\cdots,L-1$. Therefore, $Z$-type logical operators have geometric shapes of $\tb{f}(x,y)$ and $\tb{g}(x,z)$ while $X$-type logical operators have geometric shapes of $\bar{\tb{f}}(x,y)$ and $\bar{\tb{g}}(x,z)$. For a technical subtlety concerning the number of logical qubits, see Ref.~\citen{Beni13}.

The cubic code corresponds to the choices of
\begin{equation}
\begin{split}
\alpha &= 1 + (1+x+x^{2})y \\ 
\beta &= 1 + (1+x)z + (1+x+x^{2})z^{2}.\label{eq:choice}
\end{split}
\end{equation}
Note that polynomials $\alpha,\beta$ generate fractal geometries. Loosely speaking, if $\alpha$ and $\beta$ generate different fractal geometries, then the system is free from string-like logical operators~\cite{Beni13}. Thus there is no special meaning in choosing $\alpha,\beta$ to be as in Eq.~(\ref{eq:choice}), and our arguments apply to a larger class of translation symmetric stabilizer codes. 

\subsection{Thermal instability}

The fact that fractal logical operators can be supported on two-dimensional planes implies that excitations may propagate according to fractal geometries confined on a plane. We shall concentrate on excitations associated with violations of $X$-type stabilizers. Excitations can be characterized by a polynomial $E$:
\begin{align}
E(x,y,z) = \sum_{i,j,k} e_{i,j,k} x^{i}y^{j}z^{k}
\end{align}
where $e_{i,j,k}=1$ means that an excitation is present at $(i,j,k)$ while $e_{i,j,k}=0$ means that an excitation is absent. 

Consider an excitation at $(0,0,0)$, which can be represented by $E=x^0y^0z^0=1$. If one applies a Pauli operator $Z_{0,0,0}^{(B)}$, then the excitation will propagate to $E = fy$ since $Z_{0,0,0}^{(B)}$ would create an excitation characterized by $E=1+fy$. By applying Pauli $Z$ operators in an appropriate manner, one can move this single excitation at $(0,0,0)$ to 
\begin{align}
E = f^{m}y^{m}
\end{align}
in the $\hat{y}$ direction by forming a fractal geometry. See an example for $f=1+x$ shown in Fig~\ref{fig_1x}. 

Let us assume that a stabilizer term at $(0,0,0)$ is removed. Then, excitations $f^{m}y^{m}$ can be eliminated by bringing them to $(0,0,0)$. This is the key property we will use in our proof. Our strategy is to prove that an excitation $(0,0,0)$ can be eliminated with high probability by some local unitary operation when stabilizer terms are randomly removed. 

\begin{figure}[htb!]
\centering
\includegraphics[width=0.45\linewidth]{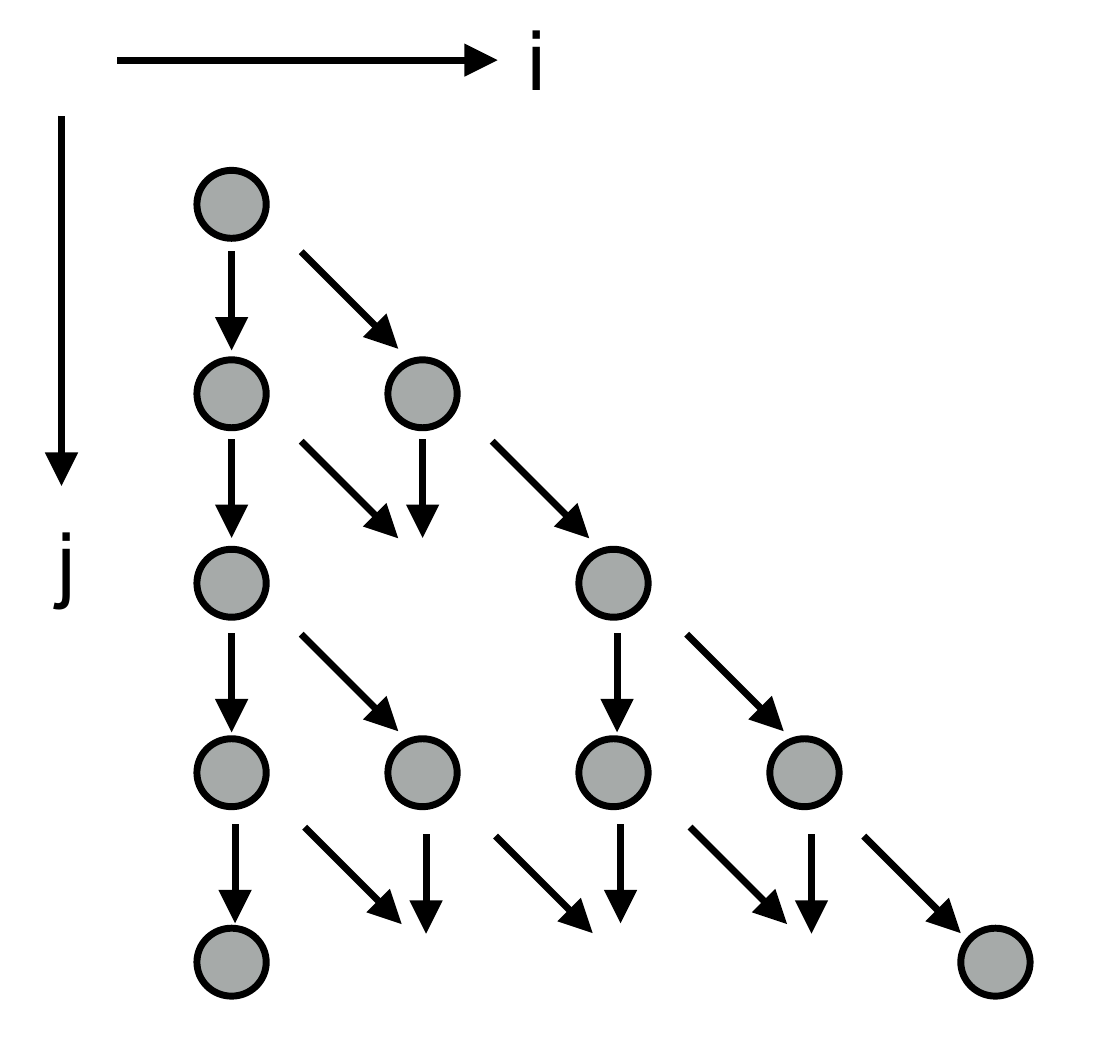}
\caption{Propagations of excitations according to a fractal geometry.
} 
\label{fig_1x}
\end{figure}

We now prove that, for $f(x)=1+x$, the system is not topologically ordered at nonzero temperature. We need to introduce a few notations. Since one can move excitations inside a two-dimensional plane, we concentrate on a plane $(i,j,0)$ by setting $k=0$. We will also write $(i,j,0)$ as $(i,j)$ for brevity of notation. Consider the following sets of sites 
\begin{align}
B =\{ (i,j) :0\leq i,j \leq r-1, \ \ i \leq j   \}.
\end{align}
We want to prove that a single excitation at $(0,0)$ can be almost surely eliminated inside a block $B$ for $r = O(\log(L)^{2})$. It is convenient to define 
\begin{align}
B_{t} &=\{ (i,j)\in B : i \geq r-t \} \\ 
L_{t}&=\{ (i,j)\in B : i = r-t \}. 
\end{align}
See Fig.~\ref{fig_region} for graphical representations of $B,B_{t},L_{t}$. 

\begin{figure}[htb!]
\centering
\includegraphics[width=0.9\linewidth]{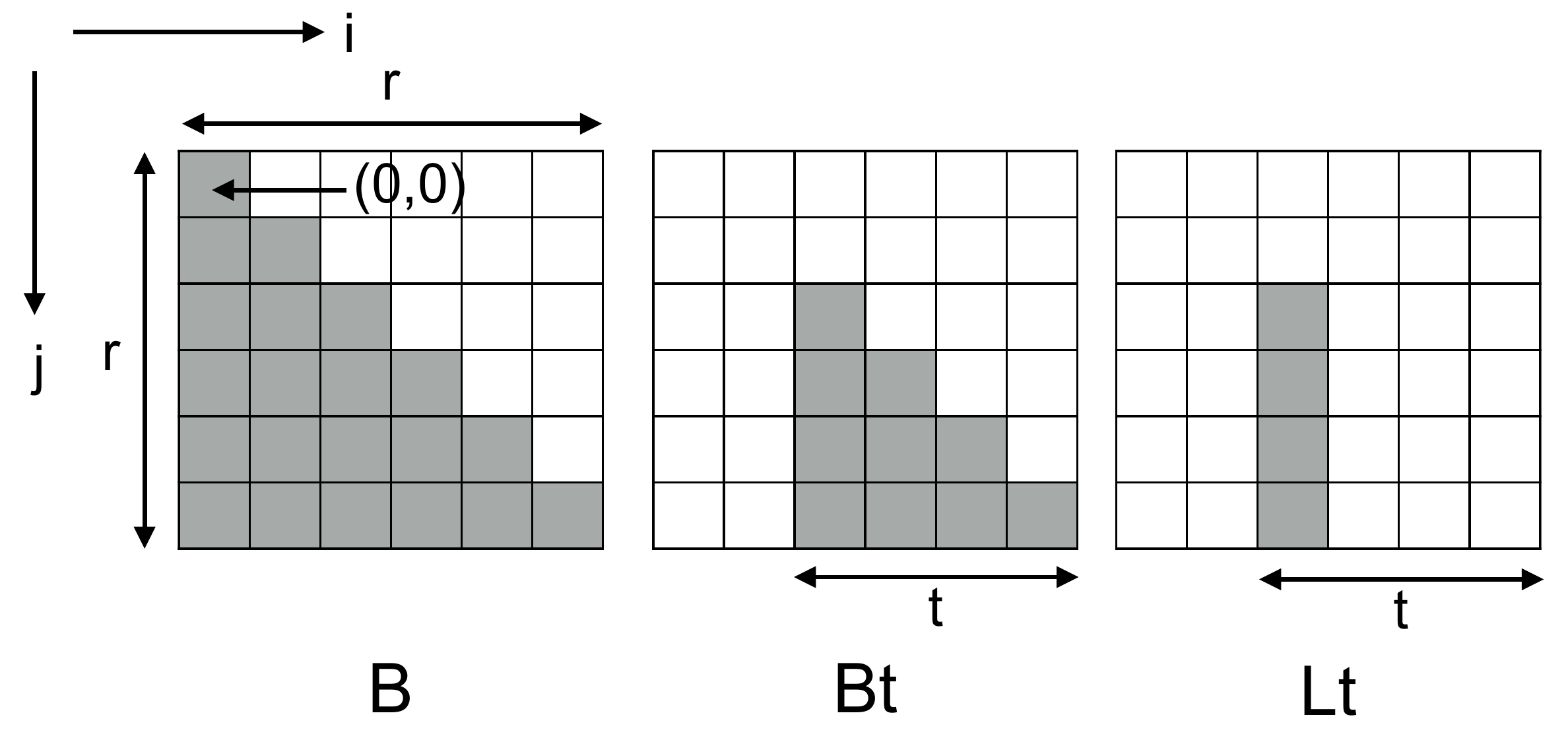}
\caption{Graphical representations of $B,B_{t},L_{t}$. 
} 
\label{fig_region}
\end{figure}

Assume that a stabilizer term at $(i,j)$ is removed. Then excitations $x^{i}f^{r-1-j}y^{r-1}$ can be eliminated since a single excitation at $(i,j)$ would propagate to $x^{i}f^{r-1-j}y^{r-1}$. Note that this excitation is located at the bottom of $B$. Let us consider a subset of $B$, denoted by  $B^{*}$, whose elements are chosen randomly with probability $p>0$. Also, define $L_{t}^{*}= L_{t} \cap B^{*}$ and $B_{t}^{*}= B_{t} \cap B^{*}$. We shall consider the following linear spaces spanned by polynomials
\begin{align}
F_{t}= \left\langle \ x^{i}f^{j} : (i,j)\in B_{t}^{*} \ \right\rangle.
\end{align}
The dimension of $F_{t}$, denoted by $d_{t}$, is defined as the number of independent polynomials in $F_{t}$. Also, define 
\begin{align}
G_{t} = \left\langle \ x^{i} : i \geq r-t \ \right\rangle.
\end{align}
Then $F_{t}\subseteq G_{t}$, and $d_{t}\leq t$. If $d_{r}=r$, then any excitation inside $B$ can be eliminated. 


\begin{lemma}\label{lemma_cubic}
For any $p > 0$ there exists a sufficiently large but finite constant $c$ such that, for $r=\frac{c}{p^3}\cdot \log(L)$, 
\begin{align}
1 - \Prob[d_{r}= r] \leq O\left(\frac{1}{L^{c'}}\right)\label{eq:prob}
\end{align}
for some constant $c' > 3 $ which is independent of $L,p$ and can be made arbitrarily large by taking large $c$.
\end{lemma}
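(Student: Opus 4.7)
The plan is to prove $d_r = r$ with high probability by constructing from $B^*$ an explicit family of $r$ linearly independent vectors in $F_r$. The natural layer-by-layer strategy is, for each $t \in \{1,\ldots,r\}$, to select a \emph{canonical pivot} $g_t = x^{r-t}f^{r-1-j}$ from some site $(r-t,j) \in L_t^*$. Each such $g_t$ has leading (lowest-degree) term $x^{r-t}$ with coefficient $1$ (since $f=1+x$ has constant term $1$), so if $L_t^* \neq \emptyset$ for every $t$, the canonical pivots have distinct leading degrees, are linearly independent, and yield $d_r = r$.

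The main obstacle is that a naive union bound gives $\Prob[\exists\,t: L_t^* = \emptyset] \leq \sum_{t=1}^{r}(1-p)^t \approx 1/p$, which is $\Omega(1)$ rather than $o(1)$: the small layers $t < t_0 := C\log(L)/p$ are empty with constant probability. To rescue a missing small-layer pivot I would draw a substitute from a larger layer $s > t$: a generator $x^{r-s}f^{r-1-j}$ has nonzero coefficient on $x^{r-t}$ precisely when $\binom{r-1-j}{s-t}$ is odd, which by Lucas's theorem amounts to binary containment of $s-t$ inside $r-1-j$. By the self-similar Sierpinski pattern of Pascal's triangle modulo $2$, there are $\Omega(r)$ candidate substitutes for each small $t$, distributed across the layers $s \in (t,r]$.

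After reducing a chosen substitute modulo the canonical pivots $\{g_s : s \geq t_0\}$, which cover the low-degree coordinates $x^0,\ldots,x^{r-t_0}$, its leading term becomes $x^{r-t}$ and it serves as the replacement pivot. Processing small layers in order of increasing $t$ and reducing each substitute modulo those already fixed preserves the triangular structure and the linear independence. Combining the two regimes: canonical failure for each $t \geq t_0$ is $(1-p)^t \leq L^{-(c'+1)}$; substitute failure for each $t < t_0$ is at most $(1-p)^{\Omega(r)} \leq L^{-(c'+1)}$ provided $r \geq c''\log(L)/p$; a union bound over all $r$ layers then yields $\Prob[d_r < r] = O(L^{-c'})$. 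The scaling $r = c\log(L)/p^3$ reflects the combined cost of per-layer probability thresholds, the Sierpinski-induced sparsity of substitute candidates, and the slack needed in the union bound to admit $c' > 3$.

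The hardest step is ensuring the reduced substitute pivots remain linearly independent, which rests on the interplay between the triangular layering of $F_r$ and the combinatorial regularity of $\binom{\cdot}{\cdot} \bmod 2$ encoded by Lucas's theorem.
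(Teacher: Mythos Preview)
Your approach is genuinely different from the paper's, and it has a real gap at the substitute step.

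The paper never tries to exhibit an explicit basis of $F_r$. Instead it tracks $d_t$ as a stochastic process in $t$ and proves two facts: (i) once $t\geq t_0=\frac{c}{p}\log L$, the event $d_{t+1}=d_t$ has probability $(1-p)^{t+1}$, hence is polynomially suppressed; (ii) whenever $\hat d_t<t$, one has $\Prob[\hat d_{t+1}\geq \hat d_t+2]\geq p^2$, \emph{uniformly} in $\hat B_t$. Fact (ii) is the crux and follows from a short linear-algebra argument: since the $t+1$ generators $e_j=x^{r-t-1}f^{r-1-j}$ in $L_{t+1}$ are independent and span $G_{t+1}$ while $\dim F_t<t$, there must exist a pair $e_i,e_{k(i)}$ with $e_i+e_{k(i)}\notin F_t$; if both corresponding sites land in $L_{t+1}^*$ (probability $p^2$) the dimension jumps by two. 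A binomial tail bound over the $r-t_0$ remaining steps then shows the deficit $t_0-\hat d_{t_0}\leq t_0$ is made up with probability $1-O(L^{-c'})$ when $r=\frac{c}{p^3}\log L$.

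Your gap is the claim that a substitute $h=x^{r-s}f^{r-1-j}$ chosen because $\binom{r-1-j}{s-t}$ is odd will, after reduction modulo the canonical pivots $\{g_s:s\geq t_0\}$, have leading term $x^{r-t}$. This does not follow. Each canonical pivot $g_s$ is supported on $\{x^{r-s},\ldots,x^{r-1}\}$, so subtracting $g_s$ to clear the $x^{r-s}$ coefficient of $h$ will in general alter the coefficient of $h$ on $x^{r-t}$ (since $t<t_0\leq s$ gives $r-t>r-s$). After the full reduction the coefficient on $x^{r-t}$ is a function of the \emph{random} canonical pivots, not of the pre-reduction coefficient you controlled via Lucas. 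The reduced $h$ could have leading term $x^{r-t'}$ for the wrong $t'<t_0$, or vanish. Your $\Omega(r)$ count of candidates and the bound $(1-p)^{\Omega(r)}$ therefore controls the wrong event. You flag this as ``the hardest step'' but do not supply the mechanism; without it the argument does not close.

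The paper's catch-up argument avoids this entirely because it never needs to know \emph{which} coordinate a new generator contributes---only that the dimension increment is at least $2$ with uniformly bounded probability whenever $d_t<t$.
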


This lemma implies that a single excitation inside $B$ can be eliminated with high probability. Note that there are $O(L^{3})$ terms in the cubic code while Eq.~(\ref{eq:prob}) implies that the probability approaches to unity faster than $O(1/L^{3})$. Thus, an imperfect Hamiltonian $H(p)$ is almost surely range-$\log(L)$ trivial, proving the thermal instability of the original Hamiltonian $H$. 

\begin{proof}
Consider an arbitrary set of sites $\hat{B}_{t}\subseteq B_{t}$ and denote the dimension of a linear space spanned by polynomials generated from $\hat{B}_{t}$ by $\hat{d}_{t}$. Let us choose sites in $L_{t+1}^{*}\subseteq L_{t+1}$ randomly with probability $p$. We define $\hat{B}_{t+1}=L_{t+1}^{*} \cup \hat{B}_{t}$ and denote the dimension of the linear space by $\hat{d}_{t+1}$. From the definition, one has $\hat{d}_{t+1}\geq \hat{d}_{t}$. Also note that $d_{t}\leq t$. We are interested in the probability distribution of the increase: $\hat{d}_{t+1}- \hat{d}_{t}$ for a given $\hat{B}_{t}$.

First, assume that $\hat{d}_{t}$ is already at its maximal value: $\hat{d}_{t}=t$. If $L^{*}_{t+1}$ is an empty set, then $\hat{d}_{t+1}=\hat{d}_{t}$. Since $x^{i}f^{j}\not\in F_{t+1}$ for all $(i,j)\in L_{t+1}$, if $L^{*}_{t+1}$ is a non-empty set, one has $\hat{d}_{t+1}= \hat{d}_{t}+1$. So one has
\begin{equation}
\begin{split}
&\mbox{Prob}[\hat{d}_{t+1}=t] = (1-p)^{t+1} \\
&\mbox{Prob}[\hat{d}_{t+1}= t+1] = 1- (1-p)^{t+1}.
\end{split}
\end{equation}
Note that the above expressions are valid for any sets $\hat{B}_{t}$ with $\hat{d}_{t}=t$. Thus, for sufficiently large $t$, the value of $d_{t}$ will almost surely increase by one. This implies that once $d_{t}$ reaches its maximal value at some $t=t_{0}$, then we have $d_{t}=t$ for all $t>t_{0}$ with very high probability. Namely, if $t=\frac{c}{p}\log(L)$ for sufficiently large $c>0$, then $\mbox{Prob}[\hat{d}_{t+1}=t]$ is polynomially suppressed. Later, we will take $r=\frac{c'}{p^3}\cdot \log(L)$, so $\mbox{Prob}[\hat{d}_{t+1}=t]$ will be negligible for all $t\geq \frac{c}{p}\log(L)$.

Next, we consider sets $\hat{B}_{t}$ with $\hat{d}_t<t$. We find a lower bound on the probability that $\hat{d}_t$ increases by more than one. Let $e_{j}=x^{r-t}f^{r-1-j}$. Note that $e_{j}$ are independent for all $j=0,\ldots,t$. Also note that $e_{j}\not \in F_{t}$ since $e_{j}$ always contains $x^{r-t}$ while polynomials in $F_{t}$ do not contain it. For a given $i$ ($0\leq i \leq t$), there always exists $k(i)$ ($0\leq k(i) \leq t$, $i\not=k(i)$) such that $e_{i} + e_{k(i)} \not \in F_{t}$ since $e_{j}$ are independent. To prove this statement, let us suppose that $e_{i}+e_{j}\in F_{t}$ for all $j\not=i$. Since $\hat{d}_{t}<t$ and $\langle e_{0},\ldots,e_{t} \rangle=G_{t+1}$, there must exist a pair of polynomials $a,b\in G_{t+1}$ such that 
\begin{align}
a,b,a+b \not\in F_{t}. \label{eq:elements}
\end{align}
Note that $a,b$ can be expanded by $e_{0},\ldots,e_{t}$. Since adding/subtracting elements from $F_{t}$ to $a,b$ does not affect Eq.~(\ref{eq:elements}), we can choose $a,b$ to be $a=e_{i}$ and $b=e_{k(i)}$ where $k(i)\not= i$ is some integer. But one has $a+b=e_{i}+e_{k(i)}\in F_{t}$, leading to a contradiction. So, if $(r-t,i),(r-t,k(i))\in L^{*}_{t+1}$, then $\hat{d}_{t+1}\geq \hat{d}_{t}+2$. Such a probability is lower bounded by $p^{2}$:
\begin{equation}
\Prob[\hat{d}_{t+1} \geq \hat{d}_t + 2] \geq p^2.
\end{equation}
This lower bound is valid for any sets $\hat{B}_{t}$ with $\hat{d}_{t}<t$. Also, recall that $\Prob[\hat{d}_{t+1} =\hat{d}_t ]= (1-p)^{t+1}$, which is very small for large $t$. Therefore, $\hat{d}_{t}$ increases by at least one almost surely and increases by more than one with probability at least $p^2$ (see Fig.~\ref{fig_transition}). 

\begin{figure}[htb!]
\centering
\includegraphics[scale=0.33]{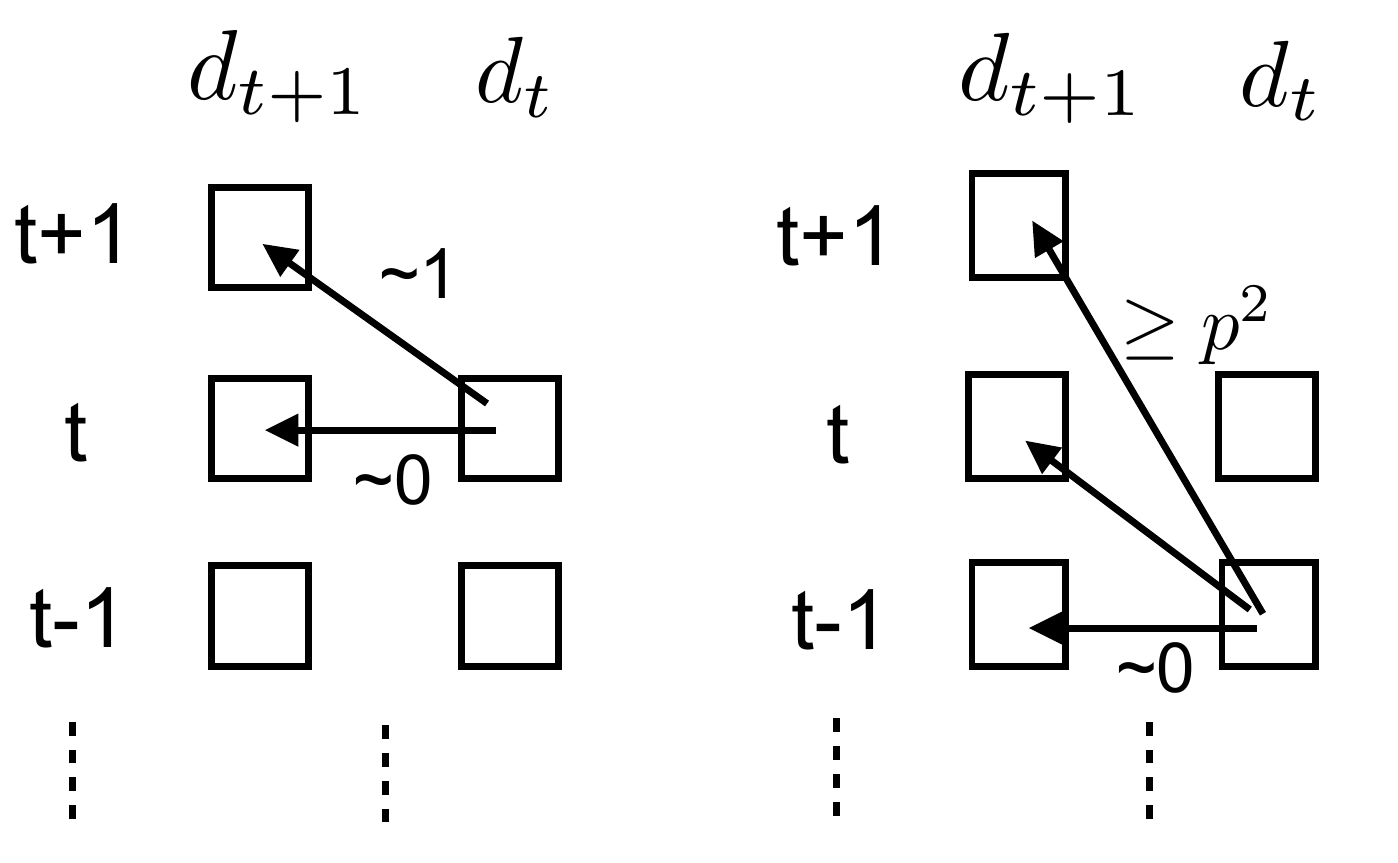}
\caption{Evolutions of probability distributions of $d_{t}$.
} 
\label{fig_transition}
\end{figure}

Intuitively, since the probability of $\hat{d}_{t}$ increasing by more than 1 is constant while the probability of $\hat{d}_{t}$ failing to increase is exponentially vanishing, a difference of $t_0 - \hat{d}_{t_0}$ (for $\hat{d}_{t_0} < t_0$) can be made up for in a large number of subsequent steps with low probability of introducing too many additional failures. More formally, let us consider an arbitrary set $\hat{B}_{t_{0}}$ for $t_{0}=\frac{c}{p}\log(L)$ with sufficiently large but finite $c>0$ so that $\hat{d}_{t}$ will almost surely increase by at least one for all $t\geq t_{0}$. In order for $d_{r}=r$, $d_{t}$ needs to reach its maximal value for some $t$ with $t_{0}\leq t \leq r$. For each $t$, the probability of increase by more than one is lower bounded by $p^{2}$. So the probability of $d_{r}$ not reaching its maximal value is upper bounded by 
\begin{align}
\sum_{j=0}^{t_{0}-1} \left(
\begin{array}{c}
r- t_{0} \\
j
\end{array}
 \right) (p^{2})^j (1-p^2)^{r-t_{0}-j}
\end{align}
which corresponds to a lower tail of the binomial distribution. For $r=\frac{c'}{p^3}\log(L)$ with sufficiently large $c'>0$, this expression is polynomially suppressed with respect to $L$. Thus, $d_{r}$ reaches its maximal value $r$ with probability approaching to unity, and all the errors are polynomially suppressed with respect to $L$. This completes the proof. 
\end{proof}

This implies that there almost surely exists a unitary quantum circuit acting solely inside $B$ that can transform all stabilizer operators so that they are diagonal in the same basis ($X$ or $Z$). This probability goes to $1$ in the thermodynamic limit ($L\rightarrow\infty$). Thus all the terms in $H(p)$ are almost surely $O(\log(L))$-removable. We have shown that quantum fractal codes with $f=1+x$ are not topologically ordered at nonzero temperature. The argument does not rely on the fact that the system consists of qubits, and thus can be generalized to quantum fractal codes with qudits. 

Finally, we sketch the proof for generic $f(x)$. Without loss of generality, one can assume that $f(x)$ can be written as $f(x)=\sum_{i=0}^{m}c_{i}x^{i}$ and $c_{0},c_{m}\not=0$ where $m$ is a positive integer. We consider a step-like region $B$, and then define $B^{*},B_{t},L_{t},B^{*}_{t},L^{*}_{t}$ in a similar manner as depicted in Fig.~\ref{fig_region_2} where each step consists of $m$ columns. We then define linear spaces by
\begin{align}
F_{t} = \langle x^{i} f^{j} : (i,j)\in B_{t}^{*} \rangle
\end{align}
and denote its dimension by $d_{t}$. Note that $d_{t}\leq mt$. We want to prove that $d_{r}=mr$ with high probability. 

When $t$ is logarithmically large with respect to $L$, we will have $d_{t+1}-d_{t}< m$ with polynomially vanishing probability. So, at each step, $d_{t}$ will almost surely increase by at least $m$. Also, one finds that the probability of having $d_{t+1}-d_{t} > m$ is lower bounded by $p^{m+1}$ as long as $d_{t}$ has not reached its maximal value by using an argument similar to the proof of lemma~\ref{lemma_cubic}. Note that the cubic code has $f=1+x+x^{2}$ with $m=2$, so we should take $r=\frac{c}{p^3}\log(L)$ for some large but finite $c>0$. Then the probability of $d_{r}$ being at its maximal value approaches to unity polynomially. Thus, the cubic code is not topologically ordered at nonzero temperature. This proof strategy can be generalized to cases where $\alpha,\beta$ have the form of $1+f_{1}(x)y+f_{2}(x)y^2+\cdots$ and $1+g_{1}(x)z+g_{2}(x)z^2+\cdots$ by appropriately modifying the constructions of $B^{*},B_{t},L_{t},B^{*}_{t},L^{*}_{t}$. However, at this moment, we do not know how to generalize this proof to arbitrary $\alpha,\beta$. It would be interesting to generalize our studies to arbitrary stabilizer codes with translation symmetries. Also it should be noted that approximating the Gibbs ensemble of the toric code requires $O(\sqrt{\log(L)})$-range circuit only while approximating the Gibbs ensemble of  the cubic code requires $O(\log(L))$-range circuit in our approach.

\begin{figure}[htb!]
\centering
\includegraphics[scale=0.30]{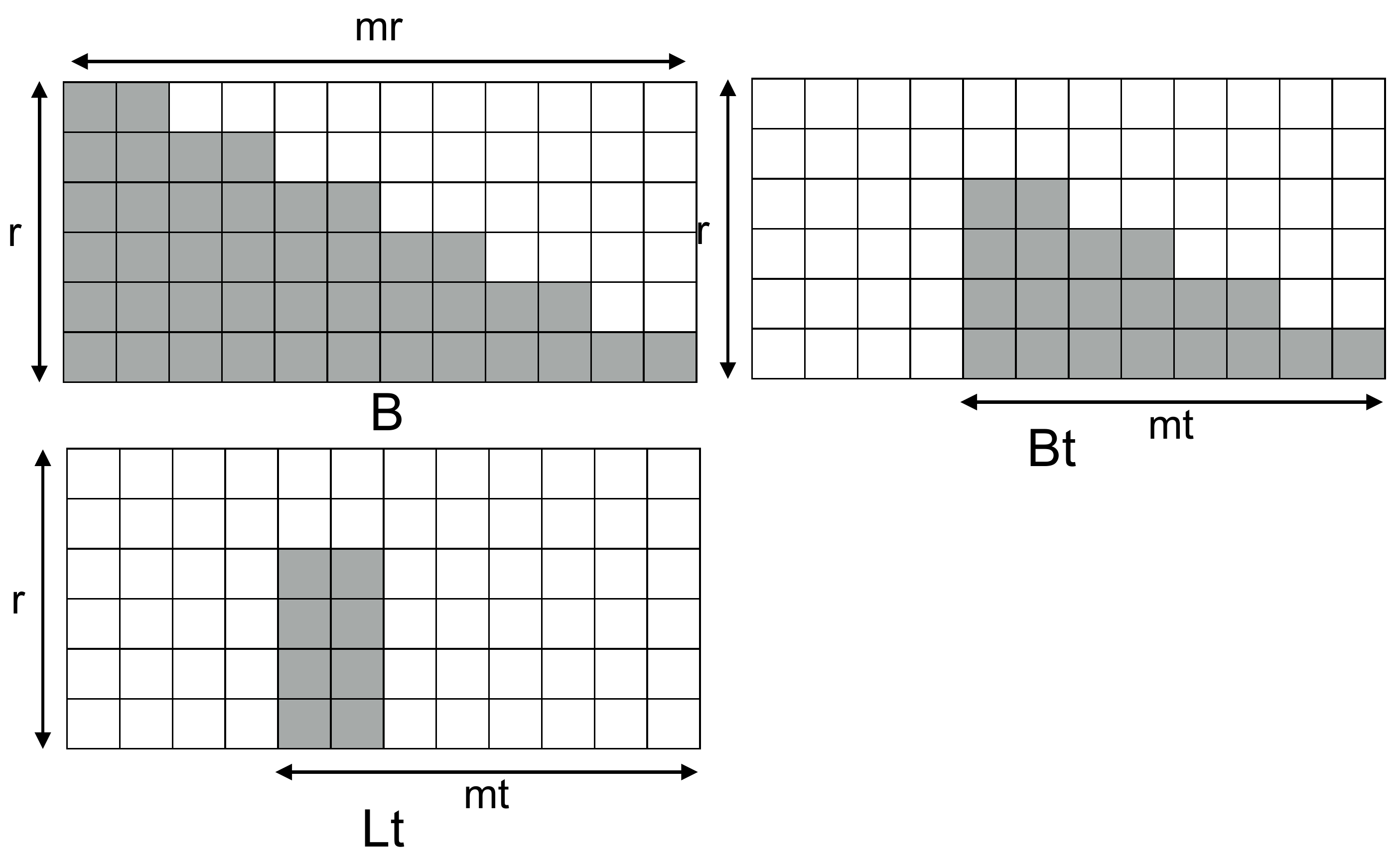}
\caption{Graphical representations of $B,B_{t},L_{t}$ for $m=2$. 
} 
\label{fig_region_2}
\end{figure}

\section{Memory time}\label{sec:decomposition}

Having studied static properties of quantum memories, we now wish to switch gears to study dynamical properties of quantum memories by computing upper bounds on memory time, the expected lifetime of quantum information. In this section, we begin by reviewing a mathematical framework to treat thermalization of stabilizer quantum memory Hamiltonians based on Davies generators. We then present a method of choosing minimal coupling to the thermal bath, minimal in the sense that the system-bath interaction is as simple as possible but still yields an ergodic reduced dynamics. 



\subsection{Lindblad operators}

We shall model the dynamics of the system-bath interaction using the Davies weak-coupling limit~\cite{Davies74} and begin by reviewing this construction. The total Hamiltonian describing the system and the bath is given by
\begin{equation}
H = H^\text{sys} + H^\text{bath} + H^\text{int}, \quad H^\text{int}  = \sum_\alpha S_\alpha \otimes f_\alpha
\end{equation}
where $S_\alpha$ are system operators, such as single-qubit Pauli operators, and $f_\alpha$ are bath operators. The density matrix of the system obeys
\begin{equation}
\dot{\rho}(t) = -i[H^\text{sys},\rho(t)] + \mathcal{L}(\rho(t))
\end{equation}
with $\rho(0)$ the initialized ground state. 

The Lindblad generator describing transfer of energy between the system and the bath is given by
\begin{multline*}
\mathcal{L}(\rho(t)) = \sum_{\alpha,\omega}h(\alpha,\omega)\big(S_{\alpha}(\omega)\rho(t)S_{\alpha}(\omega)^\dagger  \\
 -\frac{1}{2}\{\rho, S_{\alpha}(\omega)^\dagger S_{\alpha}(\omega)\}\big)
\end{multline*}
with $S_\alpha(\omega)$ describing transitions from eigenvectors of $H^\text{sys}$ with energy $E$ to eigenvectors with energy $E - \omega$ with $\omega$ delivered to the bath and $h(\alpha,\omega)$ describing the rate of these transfers. To satisfy detailed balance, we demand
\begin{equation}
h(\alpha,-\omega) = e^{-\beta\omega}h(\alpha,\omega) \label{eq:detbal}
\end{equation}
with $\beta = 1/T$, which ensures that the thermal density matrix $\rho_\beta \sim e^{-\beta H}$ is a fixed point of the dynamics ($\mathcal{L}(\rho_\beta) = 0$). 

Moreover, if the action of $\mathcal{L}$ is ergodic, meaning that all the eigenvectors can be visited, then $\rho_{\beta}$ will be the \textit{only} fixed point. So if a minimal interaction Hamiltonian is chosen, it is necessary to ensure that it is ergodic. The condition for this is given in Ref.~\citen{Spohn77,Frigerio78} by 
\begin{equation}
\{S_\alpha, H^\text{sys}\}' = \mathbb{C}1. \label{eq:ergodicity}
\end{equation}
In other words, no system operator apart from multiples of the identity should commute with all of the $S_\alpha$ and the system Hamiltonian so that all energy levels of the system Hamiltonian are accessible. For a stabilizer Hamiltonian, let us assume that $S_{\alpha}$ are Pauli operators. Then the condition for ergodicity is given by
\begin{align}
\left\langle \{S_{\alpha}\}, H^\text{sys} \right\rangle = \mathcal{P}.
\end{align}
In other words, $S_{\alpha}$ and stabilizer generators in $H^\text{sys}$ need to form the full Pauli operator group since otherwise there exists a non-trivial Pauli operator which commutes with $S_{\alpha}$ and $H^\text{sys}$. A natural choice of $\{S_{\alpha}\}$ is to pick Pauli $X$ and $Z$ operators acting on each qubit. Then the Davies generator has the form 
\begin{align}
\mathcal{L} = \sum_{j=1}^{n}\mathcal{L}_{X,j} + \sum_{j=1}^{n}\mathcal{L}_{Z,j} = \mathcal{L}_{X} + \mathcal{L}_{Z}.\label{eq:CSS_generator}
\end{align}

The convergence time to the Gibbs ensemble is given by the inverse of the gap of the superoperator $\mathcal{L}$. When the Davies generator has the form of Eq.~(\ref{eq:CSS_generator}) and the Hamiltonian is described by a CSS stabilizer code, convergence time is given by $\tau^{-1}_{\text{conv}}=\min( \text{Gap}(\mathcal{L}_{X}), \text{Gap}(\mathcal{L}_{Z}) )$. Our goal is to obtain lower bounds on quantum memory time of CSS codes. Since memory systems suffer from both $X$ and $Z$ errors, memory time is given by $\tau^{-1}_{\text{memory}}=\max( \text{Gap}(\mathcal{L}_{X}), \text{Gap}(\mathcal{L}_{Z}) )$. Note that it is possible that memory time and convergence time are different. For instance, in the three-dimensional toric code, convergence time is $O(\exp(L))$ while quantum memory time is $O(1)$. See~\cite{Alicki09} for details.

To obtain an upper bound on quantum memory time, or equivalently lower bounds on gaps in $\mathcal{L}_{X}$ and $\mathcal{L}_{Z}$, we pick a smaller, but ergodic $\{S^{\text{min}}_{\alpha}\}$. Observe that one can remove some of  Pauli operators from $\{S_{\alpha}\}$ while retaining its ergodicity. The gap for $\{S_{\alpha}\}$ will be lower bounded by the one for $\{S^{\text{min}}_{\alpha}\}$ since choosing a smaller set always makes the gap smaller. In Ref.~\citen{Alicki09}, Alicki \emph{et. al.} used this strategy to provide an upper bound on quantum memory time of the two-dimensional toric code, giving a rigorous proof that the model is not self-correcting. Namely, they choose $\{S^{\text{min}}_{\alpha}\}$ carefully such that the problem is reduced to computing classical memory time of one-dimensional Ising model.

\subsection{Ergodic Decomposition}

In this section, we generalize an ergodic decomposition of the Davies generator applied by Alicki, \emph{et. al.} to the toric code in Ref.~\citen{Alicki09} to all CSS codes. 

\begin{lemma}
Given a CSS stabilizer code defined by stabilizer group $\mathcal{S}$ with $k$ logical qubits, there exists a tripartition of the full system into three non-overlapping sets of qubits, denoted by $L$, $R_X$, $R_Z$ such that
\begin{equation}
\begin{split}
&\mathcal{P}^{(X)} = \langle \mathcal{S}_{X}, \mathcal{P}^{(X)}(L\cup R_X)\rangle \\
&\mathcal{P}^{(Z)} = \langle \mathcal{S}_{Z}, \mathcal{P}^{(Z)}(L\cup R_Z)\rangle 
\end{split}
\end{equation}
where $\mathcal{P}^{(X,Z)}$ is a group of Pauli-$X,Z$ operators, $\mathcal{P}^{(X,Z)}(R)$ is a group of Pauli-$X,Z$ operators acting on region $R$, and 
$$|L| = k \quad |R_X| = \log_2 |\mathcal{S}_{Z}| \quad |R_Z| = \log_2 |\mathcal{S}_{X}|.$$
\end{lemma}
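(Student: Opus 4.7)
The plan is to translate the claim into pure linear algebra over $\mathbb{F}_2$. Identify $\mathcal{P}^{(X)}$ with $\mathbb{F}_2^n$ via $X_i \leftrightarrow e_i$, let $C_X \subseteq \mathbb{F}_2^n$ be the binary image of $\mathcal{S}_X$ of dimension $r_X := \log_2 |\mathcal{S}_X|$, and similarly define $C_Z$ of dimension $r_Z := \log_2 |\mathcal{S}_Z|$. Commutation of every $X$-type and every $Z$-type stabilizer becomes the CSS orthogonality $C_X \subseteq C_Z^{\perp}$ under the standard inner product. Writing $L \cup R_X = [n]\setminus R_Z$, the identity $\mathcal{P}^{(X)} = \langle \mathcal{S}_X, \mathcal{P}^{(X)}(L\cup R_X)\rangle$ is equivalent to the projection $\pi_{R_Z}: C_X \to \mathbb{F}_2^{R_Z}$ being surjective. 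Combined with the size constraint $|R_Z| = r_X$, this forces $R_Z$ to be an \emph{information set} for $C_X$. By symmetry, $R_X$ must be an information set for $C_Z$ of size $r_Z$, and then $|L| = n - r_X - r_Z = k$ follows automatically.

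The task therefore reduces to constructing \emph{disjoint} information sets $R_Z$ for $C_X$ and $R_X$ for $C_Z$. I would proceed greedily. First, pick any information set $R_Z$ for $C_X$ of size $r_X$, which exists since $\dim C_X = r_X$ guarantees some $r_X$ columns of a generator matrix of $C_X$ are linearly independent. Let $T = [n] \setminus R_Z$ and consider the restriction $\pi_T: C_Z \to \mathbb{F}_2^T$. The crucial observation is that $\pi_T$ is injective: if $w \in C_Z$ satisfies $\mathrm{supp}(w) \subseteq R_Z$, then CSS orthogonality gives $w \cdot c = w|_{R_Z} \cdot c|_{R_Z} = 0$ for every $c \in C_X$, and since $c|_{R_Z}$ ranges over all of $\mathbb{F}_2^{R_Z}$ by choice of $R_Z$, we must have $w|_{R_Z} = 0$ and hence $w = 0$. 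Consequently $\dim \pi_T(C_Z) = r_Z$, so we may select an information set $R_X \subseteq T$ for $C_Z$ of size $r_Z$. Setting $L := [n] \setminus (R_X \cup R_Z)$ completes the tripartition with the required sizes.

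The two spanning statements then hold directly from the information-set property: $\pi_{R_Z}(C_X) = \mathbb{F}_2^{R_Z}$ implies $C_X$ together with the coordinate vectors on $L \cup R_X$ span $\mathbb{F}_2^n$, and symmetrically on the $Z$ side. The main, and essentially only, obstacle is the disjointness of $R_X$ and $R_Z$; this is precisely the step that uses CSS orthogonality in an essential way. Without $C_X \subseteq C_Z^{\perp}$, a nonzero $Z$-stabilizer could be supported entirely inside an arbitrary information set of $C_X$ and block the construction, whereas the orthogonality forces any such element to vanish. Once disjointness is secured, the rest is routine linear algebra.
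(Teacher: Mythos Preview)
Your proof is correct, but it proceeds by a genuinely different route than the paper.  The paper argues constructively: it runs a symplectic Gram--Schmidt on the $Z$-type generators $S^{(Z)}_j$, at each step picking a single-qubit Pauli-$X$ operator $A^{(X)}_j$ that anticommutes with $S^{(Z)}_j$ and then fixing up the remaining commutation relations by multiplying generators.  The qubits on which the $A^{(X)}_j$ live form $R_X$, and the process is repeated with the logical $\bar Z_i$ to carve out $L$.  The second spanning identity then drops out because, after relabeling, each $S^{(Z)}_j$ takes the form $Z_j U^{(Z)}_j$ with $U^{(Z)}_j$ supported on $\overline{R_X}$.  By contrast, you abstract the whole problem to linear algebra over $\mathbb{F}_2$, recognize that the two spanning conditions are precisely the statement that $R_Z$ and $R_X$ are \emph{information sets} for $C_X$ and $C_Z$ respectively, and then observe that CSS orthogonality $C_X \subseteq C_Z^\perp$ guarantees the projection of $C_Z$ off any information set of $C_X$ is injective, so a disjoint information set for $C_Z$ can always be found in the complement.

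What each approach buys: the paper's inductive algorithm is explicit and yields a concrete polynomial-time procedure for producing the tripartition, which the authors later exploit when drawing the decompositions for the toric code, the welded code, and the fractal codes.  Your argument is shorter and more conceptual; it isolates the single point where the CSS structure is used (no nonzero $Z$-stabilizer can be supported entirely inside an information set of $C_X$) and makes transparent that the result is really a statement about pairs of mutually orthogonal binary codes.  Both are complete proofs of the lemma.
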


Thus choosing 
\begin{equation}
\{S_\alpha\} = \mathcal{P}^{(X)}(R_X\cup L) \cup \mathcal{P}^{(Z)}(R_Z\cup L)
\end{equation}
for the noise model's Lindblad generator satisfies the ergodicity requirement in Eq.~(\ref{eq:ergodicity}). Note that the choice of $\{S_{\alpha}^{\text{min}}\}$ is not unique, but the lemma teaches us the minimal number of generators to ensure the ergodicity. The proof also provides an efficient algorithm (running in polynomial time) to find $\{S_{\alpha}^{\text{min}}\}$, which is similar to the one presented in Ref.~\citen{Beni10}.

\begin{proof}
Denote independent generators of $\mathcal{S}_{Z}$ as $S^{(Z)}_j$. We seek a set of operators $\{A^{(X)}_j\}$ such that $S^{(Z)}_i$ and $A^{(X)}_j$ anticommute if $i=j$ and commute otherwise. We shall therefore proceed by induction on $j$. For $S^{(Z)}_1$, we can find a single qubit Pauli-$X$ operator $A^{(X)}_1$ such that $\{S^{(Z)}_1,A^{(X)}_1\}=0$. However, $A^{(X)}_1$ may anticommute with other generators $S^{(Z)}_j$ for $j > 1$, so for all $j > 1$, if $\{S^{(Z)}_j,A^{(X)}_1\}=0$, we update $S^{(Z)}_j\rightarrow S^{(Z)}_1 S^{(Z)}_j$. This resolves all of the commutation relations for $A^{(X)}_1$, thereby completing the base case. The commutation relations so far can be concisely summarized in the following canonical form:
\begin{equation}
\left\langle\begin{array}{cccc}
S^{(Z)}_1 & S^{(Z)}_2 & \dots & S^{(Z)}_{n_Z} \\
A^{(X)}_1 & & &  \\
\end{array}\right\rangle
\end{equation}
with $n_Z = |R_X| = \log_2 |\mathcal{S}_Z|$. Assume that there exist $m$ operators $A^{(X)}_j$ with $1\leq j \leq m$ satisfying the desired commutation relations. 
\begin{equation}
\left\langle\begin{array}{cccccc}
S^{(Z)}_1 & \dots & S^{(Z)}_m & S^{(Z)}_{m+1} & \dots & S^{(Z)}_{n_Z} \\
A^{(X)}_1 & \dots & A^{(X)}_m  & & &  \\
\end{array}\right\rangle
\end{equation}
The group $\langle A^{(X)}_1  \dots  A^{(X)}_m \rangle$ is a set of Pauli-$X$ operators acting on $m$ qubits. For $S^{(Z)}_{m+1}$, pick a single-qubit Pauli-$X$ operator $A^{(X)}_{m+1}$ such that $\{S^{(Z)}_{m+1}, A^{(X)}_{m+1}\} = 0$. For $j\leq m$, if $\{S^{(Z)}_{j}, A^{(X)}_{m+1}\} = 0$ we update $A^{(X)}_{m+1}\rightarrow A^{(X)}_{j}A^{(X)}_{m+1}$. For $j > m+1$ if $\{S^{(Z)}_{j}, A^{(X)}_{m+1}\} = 0$, we update $S^{(Z)}_{j}\rightarrow S^{(Z)}_{m+1}S^{(Z)}_{j}$. This ensures that the desired commutation relations are satisfied for $A^{(X)}_j$ for $1\leq j \leq m+1$:
$$\left\langle\begin{array}{cccccc}
S^{(Z)}_1 & \dots & S^{(Z)}_m & S^{(Z)}_{m+1} & \dots & S^{(Z)}_{n_Z} \\
A^{(X)}_1 & \dots & A^{(X)}_m & A^{(X)}_{m+1} & &  \\
\end{array}\right\rangle$$
Further, the commutation relations and choice of operators for $A^{(X)}_j$ guarantees that $\langle A^{(X)}_1  \dots  A^{(X)}_{m+1} \rangle$ must be the group of Pauli-$X$ operators acting on $m+1$ qubits. This completes the inductive step to give, in all,
\begin{equation}
\left\langle\begin{array}{ccc}
S^{(Z)}_1 & \dots & S^{(Z)}_{n_Z} \\
A^{(X)}_1 & \dots & A^{(X)}_{n_Z}  \\
\end{array}\right\rangle
\end{equation}
Define the group $\mathcal{P}^{(X)}(R_X)$ as 
\begin{equation}
\mathcal{P}^{(X)}(R_X) = \langle A^{(X)}_1  \dots A^{(X)}_{n_Z} \rangle
\end{equation}
where $R_X$ is a set of $n_Z$ qubits. Suppose that we now expand $\mathcal{S}_Z$ by adding in $k$ independent $Z$-type logical operators $\bar{Z}_i$:
$$\left\langle\begin{array}{cccccc}
S^{(Z)}_1 & \dots & S^{(Z)}_{n_Z} & \bar{Z}_1 & \dots & \bar{Z}_k\\
A^{(X)}_1 & \dots & A^{(X)}_{n_Z} & & & \\
\end{array}\right\rangle$$
Then, via the algorithm described above, we can find $k$ more operators $A^{(X)}_j$ defined on another region $L$ disjoint from $R_X$ with $|L| = k$. The full Pauli-$X$ group $\mathcal{P}^{(X)}$ is thus given by
\begin{equation}
\mathcal{P}^{(X)} = \left\langle \mathcal{S}^{(X)}, \mathcal{P}^{(X)}(R_X \cup L) \right\rangle.
\end{equation}
Indeed, one can confirm this by observing that $k+n_z+n_x=n$, and $S_{j}^{(Z)}\not\in \mathcal{P}^{(X)}(R_X \cup L)$ since $S_{j}^{(Z)}$ commutes with $S_{j}^{(Z)}$ and $\bar{Z}_{j}$.

We have shown that $\langle A^{(X)}_1  \dots  A^{(X)}_{n_{z}} \rangle$ is a set of Pauli-$X$ operators acting on $n_{z}$ qubits. Let us denote Pauli-$X$ and Pauli-$Z$ operators acting on the $j$-th qubit in $R_{Z}$ by $X_j$ and $Z_{j}$ ($j=1,\ldots,n_{z}$). Note that stabilizer generators $S_{j}^{(Z)}$ are independent and anti-commute with at least one of Pauli-$X$ operators $X_{i}$ in $R_{Z}$. We can thus relabel stabilizer generators by taking their products such that $S^{(Z)}_j = Z_jU^{(Z)}_j$ where $Z_j$ is Pauli-$Z$ acting on the $j$-th qubit and $U^{(Z)}_j$ is some product of Pauli-$Z$ operators acting somewhere on $\overline{R_X}$. Therefore, if we want to generate $\mathcal{P}^{(Z)}$ using $\mathcal{S}_Z$, we only need $\mathcal{P}^{(Z)}(\overline{R_X})$, noting that $L\subseteq \overline{R_X}$. The lemma is thus proved.
\end{proof}

As an example of the decomposition, let us consider the two-dimensional toric code by following Alicki \emph{et al}. Define a pair of comb-like regions $R_{X},R_{Z}$ as depicted in Fig.~\ref{fig_decomposition}. Also, define a set of two qubits at the corners as $L$. Then one can verify that a decomposition into $R_{X},R_{Z},L$ satisfies the ergodicity condition. One can thus choose $\{S_{\alpha}\}$ so that Pauli $X,Z$ operators act on $R_{X}\cup L, R_{Z}\cup L$ respectively. This reduces the problem to the mixing analysis of a classical Ising model defined on a comb-like region $R_{X}\cup L$. 

\begin{figure}[htb!]
\centering
\includegraphics[scale=0.25]{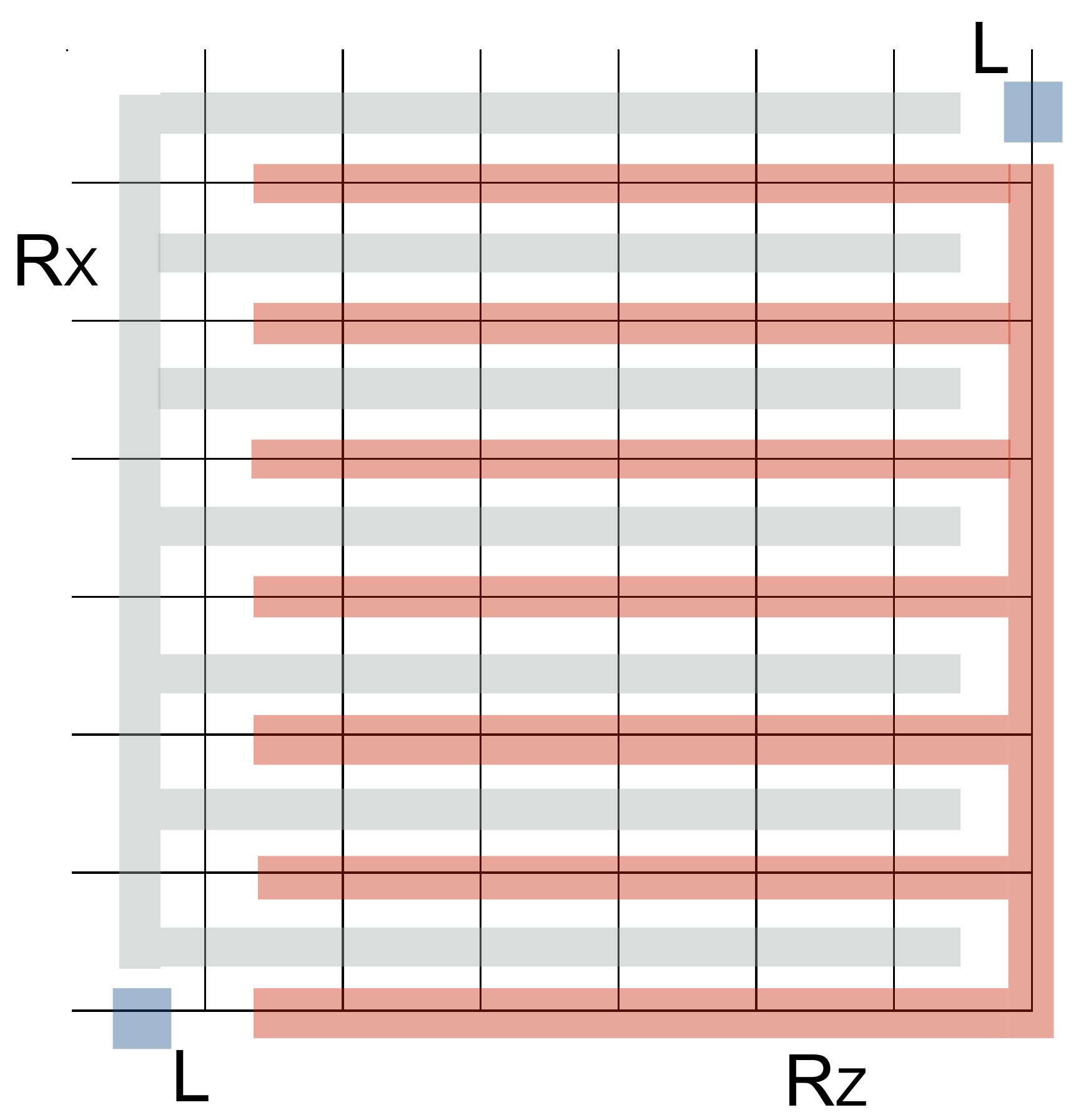}
\caption{Ergodic decomposition for the toric code. Qubits live on edges of the lattice.
} 
\label{fig_decomposition}
\end{figure}

\section{Upper Bound on Memory Time}\label{sec:weld}

\begin{figure}[htb!]
\centering
\includegraphics[scale=0.55]{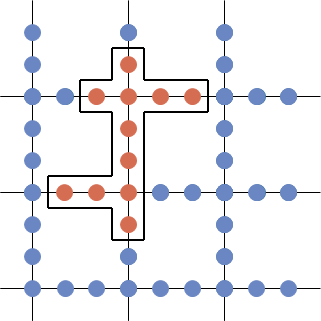}
\caption{Two-dimensional sparse lattice. The lattice has length $L$ and each small square has dimensions $\sqrt{L}\times \sqrt{L}$. A Peierls droplet of frustration is shown in red, marking, for example, a cluster of spin down sites in a sea of spin up sites (blue). This contour has length $\ell=6$, or energy cost $12J$.}
\label{fig_sparse-lattice}
\end{figure}

In this section, we derive an upper bound on quantum memory time of the welded code.

\subsection{The welded code}

The three-dimensional toric code has pairs of one-dimensional $Z$-type logical operators and two-dimensional $X$-type logical operators which characterize propagations of point-like electric charges and loop-like magnetic fluxes. Since the energy barrier for propagations of magnetic loops is $O(L)$, the system is thermally protected from $X$-type errors. Yet, electric charges can propagate freely, so the system is not thermally stable against $Z$-type errors, and thus is not self-correcting. The procedure of welding allows one to introduce energy barrier to propagations of electric charges. Namely, by attaching multiple copies of three-dimensional toric code to form a larger stabilizer code, electric charges need to split in a ``welded'' surface, which leads to a polynomially diverging energy barrier. See appendix~\ref{sec:boundary} for physical interpretation of the welding procedure.

By applying the aforementioned algorithm of finding an ergodic decomposition, we can show that it suffices to study classical memory time of the Ising model supported on a sparse graph, which is similar to the one shown in Fig.~\ref{fig_sparse-lattice}. By a sparse graph, we mean a two-dimensional skeleton-like graph where spins live on the vertices, and vertices with degree 4 are separated by $L^{1-\alpha} - 2$ vertices with degree 2. In Fig.~\ref{fig_sparse-lattice}, we show a two-dimensional example with $\alpha=1/2$, though the analytical derivation below does not depend on the choice of $\alpha$. A physical intuition for this reduction can be obtained by looking at trajectories of electric charges. At each welding surface, point-like electric charges split into multiple charges, and their trajectories form a two-dimensional sparse graph. 

We now bound the memory time of the welded code by bounding the memory time of a corresponding classical code, conceived by the Pauli-$Z$ terms of the reduced $\mathcal{L}$, which can be found as per the algorithm described above. We henceforth show that the memory time of the welded solid code is maximized at each temperature $T$ for a size $L_\text{max}(T)\propto \exp(c\beta)$ for $c > 0$ and $\beta = T^{-1}$. Further, along the curve described by $L_\text{max}(T)$, this bound follows $\tau_\text{max} \propto \exp(\kappa ' (\exp(\kappa \beta))$ for positive constants $\kappa$ and $\kappa '$.

The reduced Lindblad generator responsible for errors on $X$-type stabilizers ($S_\alpha \in \mathcal{P}^{(Z)}(R_Z\cup L)$) is geometrically equivalent to a sparse lattice and mathematically equivalent to a system of nearest-neighbor, ferromagnetically coupled bits as in Fig.~\ref{fig_sparse-lattice}. For this thermal noise, one could equivalently write the Hamiltonian as 
\begin{equation}
H = -J\sum_{\braket{ij}} Z_i \otimes Z_j \qquad J > 0
\end{equation}
where thermal noises are given by Pauli-$X$ operators.

We shall develop the calculation of classical memory time via the Peierls argument as follows. Let us start with one of the ground states, say $|00\ldots0\rangle$, and couple the system to the thermal bath which would flip spins and create regions with $|1\rangle$ states. We can view these clusters of $|1\rangle$ states as excitation droplets whose contours cost energy as spins are not aligned. If the droplet grow too large or too numerous, then erroneous spin flips may proliferate, leading to the loss of encoded information. The probability of observing each excitation droplet of excitation energy $E$ is roughly given by $e^{-\beta E}$. There are many different shapes of excitation droplets with the same energy $E$, and the probability of observing an excitation droplet of energy $E$ is given by $\Omega\exp(-\beta E)= \exp(-\beta F)$ where $F=E-TS$ with $S=\log \Omega$. Thus, the probability of observing an erroneous spin flip can be estimated by computing free energies associated with excitation droplets.

We pause briefly to interpret this probability in terms of quantum circuit complexity and error-correction. Excitation droplets represent errors which could be eliminated by active error-correction. However, performing error-correction instantaneously would require nonlocal feedback circuits to compute the paths of all quasiparticles in the system. Instead, as in Sec.~\ref{sec:topo}, we divide the system into local grids and are interested in the probability that there exists some circuit acting inside each grid to remove excitations without inducing a global error. The cost of employing local error-correction, such as majority voting in finite neighborhoods, is a nonzero probability of failure. In particular, if errors grow too numerous, the correction circuits, decided with imperfect knowledge of the trajectories of the quasiparticles, may induce a global error on the system. The aforementioned probability in terms of free energy, then, can be viewed as the probability of such a failure. Generally, at high temperature, excitations proliferate and failure is likely, but at low temperature, these configurations are suppressed, so failure is unlikely. The lifetime of information stored initially is then limited by the time taken for this probability to rise. See Sec.~\ref{sec:thermaltopo} for more applications of this connection.

We begin calculating the free energy barrier by defining a Peierls contour on the dual lattice (Fig.~\ref{fig_sparse-lattice}). A contour is defined as a closed path on the dual lattice along which there is a domain wall (one side is spin up, the other side is spin down), and let $\ell$ will be used to denote its length. Specifically, the length is the \textit{number of edges crossed by the contour}. A domain of frustration (relative to the majority), or ``droplet", needs to grow to have at least macroscopic length $\ell = O(L^{\alpha})$ with energy barrier is $O(J L^{\alpha})$. This is the lowest energy barrier of an error pathway. The entropy of a contour is lower bounded as
\begin{equation}
S(\ell) \geq \log(\Omega(\ell)) \geq \log(L^{(1-\alpha)\ell}2^\ell).
\end{equation}
Here $L^{1-\alpha}$ results from the fact that, for each edge, there are $L^{1-\alpha}$ possible edges crossed by the contour. The $2^\ell$ term lower bounds the entropy originating from the shape of the contour as a random walk. We estimate the rescaled time $\tau$ based on the Arrhenius law for a reaction rate as
\begin{equation}
\tau = \frac{\exp(\beta E_\text{b})}{\Omega} = \exp(\beta F_\text{b})
\end{equation}
where $E_\text{b}$ is the energy barrier associated with the domain and $F_\text{b}$ is the corresponding \textit{free energy} barrier. In this calculation to find an upper bound on $\tau$, we have corrected the Arrhenius law to account for the different paths by which this macroscopic energy barrier can be overcome:
\begin{equation}
F_\text{b}(\beta, L) \leq 2JL^{\alpha}-\frac{1}{\beta}\log\left(L^{(1-\alpha) L^{\alpha}}2^{L^{\alpha}}\right).
\end{equation}
Observe that at a given $\beta$ there is a system size $L_\text{max}$ such that \text{the lower bound on }$F_\text{b}$ is maximized, given by
\begin{equation}
L_\text{max} \propto \exp(c\beta) \label{eq:Lmax}
\end{equation}
with $c > 0$ and dimensions of energy or temperature. Substituting this into the expression for the energy cost,
\begin{equation}
E_\text{b,max}\propto \exp(c'\beta)
\end{equation}
with $c' > 0$. Finally, substituting it into the expression for the bound on the memory time,
\begin{equation}
\tau \leq \tau_\text{max}\propto \exp(\kappa '  \exp(\kappa\beta)).
\end{equation}
with $\kappa$ and $\kappa '$ positive.

While we relied on the Arrhenius law to estimate an upper bound on quantum memory time, the above argument can be turned into a rigorous derivation of upper bound on quantum memory time and lower bounds on gaps in $\mathcal{L}$. Rigorous approaches of finding mixing time for the Glauber dynamics often relies on refinement of the Peierls argument, see Ref.~\citen{Martinelli} for instance. Namely, one considers canonical paths between two configurations in order to bound the mixing time under Markovian dynamics~\cite{Temme14}. Our estimates of the entropy correspond to the logarithm of the number of possible canonical paths connecting two configurations in different ground state sectors. While a rigorous mathematical proof is beyond the scope of this paper, we believe that our estimation of free energy will lead to rigorous derivation of the mixing time for the Ising model on a sparse graph. 

In Ref.~\citen{Bravyi13}, Bravyi and Haah obtained a general formula for obtaining lower bounds on quantum memory time. Applying their bound to the welded code, we obtain a lower bound of the form $\exp(a\exp(a' \beta))$, which was also predicted by Michnicki. Indeed, the optimization of the number of qubits in Ref.~\citen{Bravyi13} can be viewed as maximizing the free energy barrier at fixed $T$. Since our argument provides an upper bound of the double exponential form, we conclude that quantum memory time of the welded code is doubly exponential in the inverse temperature $\beta$. 

\subsection{Thermal topological order and self-correction}\label{sec:thermaltopo}

The Peierls argument, based on estimation of free energy, allowed us to obtain upper bound on quantum memory time, showing that the welded code is not a self-correcting quantum memory. This analysis relates the notion of self-correcting quantum memory to the notion of topological order at nonzero temperature. In section~\ref{sec:topo4}, we showed that the problem of approximating the Gibbs ensemble $\rho_{\beta}$ boils down to evaluations of free energy associated to excitation droplets. Namely, we divide the system into local regions and bound the probability that there exists some ``quantum error-correction'' that will remove excitations inside local regions by sending excitations to sinks; this probability can be viewed as the rate of successful convergence to the thermal Gibbs ensemble. 

\begin{figure}[htb!]
\centering
\includegraphics[scale=0.45]{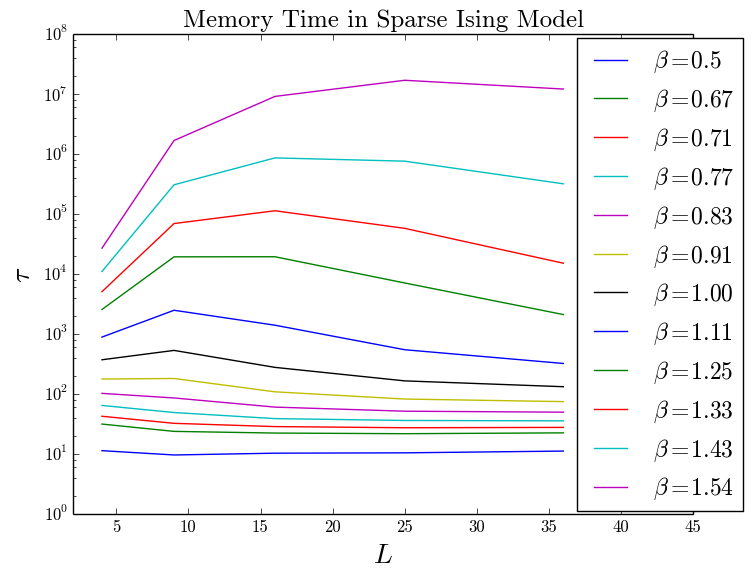}
\caption{Rescaled memory time $\tau$ as a function of system size $L$ at different inverse temperatures. At each temperature, there is an optimal system size, and the temperature at which $L$ is the optimal size goes to zero as $L\rightarrow\infty$.}
\label{fig_data} 
\end{figure}

To be concrete, let us return to the problem of thermal instability of topological order for the welded code. To show that the welded code is not topologically ordered at nonzero temperature, it suffices to prove that an isolated excitation can be eliminated with high probability insider some box of length $R$. We have already seen that the probability of eliminating excitations inside a box is closely related to free energy associated with excitation droplets. In the above Peierls argument, we have observed that free energy becomes negative at $L=L_{0}\approx e^{c'\beta}$ for some positive constant $c'$. This implies that the excitation can be eliminated inside the box of length $R$ if $R\geq L_{0}$. As such, excitations can be eliminated almost surely inside a box of length  $L_{0}\log(L)$ and thus the Gibbs ensemble can be approximated via a range $L_{0}\log(L)$-range circuit. Here the $\log(L)$ factor is necessary because one needs to successfully eliminate excitations inside each grid and there are $\left(\frac{L}{R}\right)^3$ grids.

Having studied memory time of the welded code via the Peierls argument, let us revisit the cubic code. In appendix~\ref{sec:fractal_decomposition}, we provide an ergodic decomposition for quantum fractal codes which enables reduction to problems concerning classical fractal spin models. This approach allows one to obtain upper bound on quantum memory time of quantum fractal codes via the Peierls argument. Here, we derive the upper bound from the perspective of thermal instability of topological order in the cubic code. We have shown that, for a fixed temperature, an isolated excitation can be removed inside a box of length $R \approx \frac{c}{p^2}\log(L)$ for large $L$ where $p=\frac{2}{1+e^{\beta}}$. Let us write $R$ as $R\approx e^{c'\beta}\log(L)$ with some constant $c'>0$, and vary the system size $L$. Let $L_{0}=e^{c''\beta}$ be the value of $L$ such that $R=L$. Then, for $L\leq L_{0}$, an isolated excitation cannot be removed inside some finite box since $R\geq L$, and its elimination requires a quantum circuit of $O(L)$-range. We sketch the growth of complexity of the Gibbs ensemble, as measured by the range of quantum circuits, in Fig.~\ref{fig_complexity} where the growth is linear for for $L\leq L_{0}$ while it suffers from logarithmic slowdown for $L\geq L_{0}$. This reasoning also shows that in the thermodynamic limit the transition temperature $T_c$ to a topologically ordered phase goes to 0.

It is also illuminating to assess the free energy barrier separating different ground state sectors of the cubic code. Observe that $R\leq L$ implies that entropic contributions dominate energetic contributions and the free energy becomes very small. For $R\geq L$, energetic contributions are dominant which is order of $\log(L)$. Thus, the free energy barrier will grow logarithmically and then become small as shown in Fig.~\ref{fig_complexity}. A peak in the free energy barrier is expected to be at $L\approx L_{0}$ which leads to the peak value of $O(\beta)$. This leads to the quantum memory time of $\exp(\kappa\beta^2)$ for some positive constant $\kappa>0$.

\begin{figure}[htb!]
\centering
\includegraphics[scale=0.35]{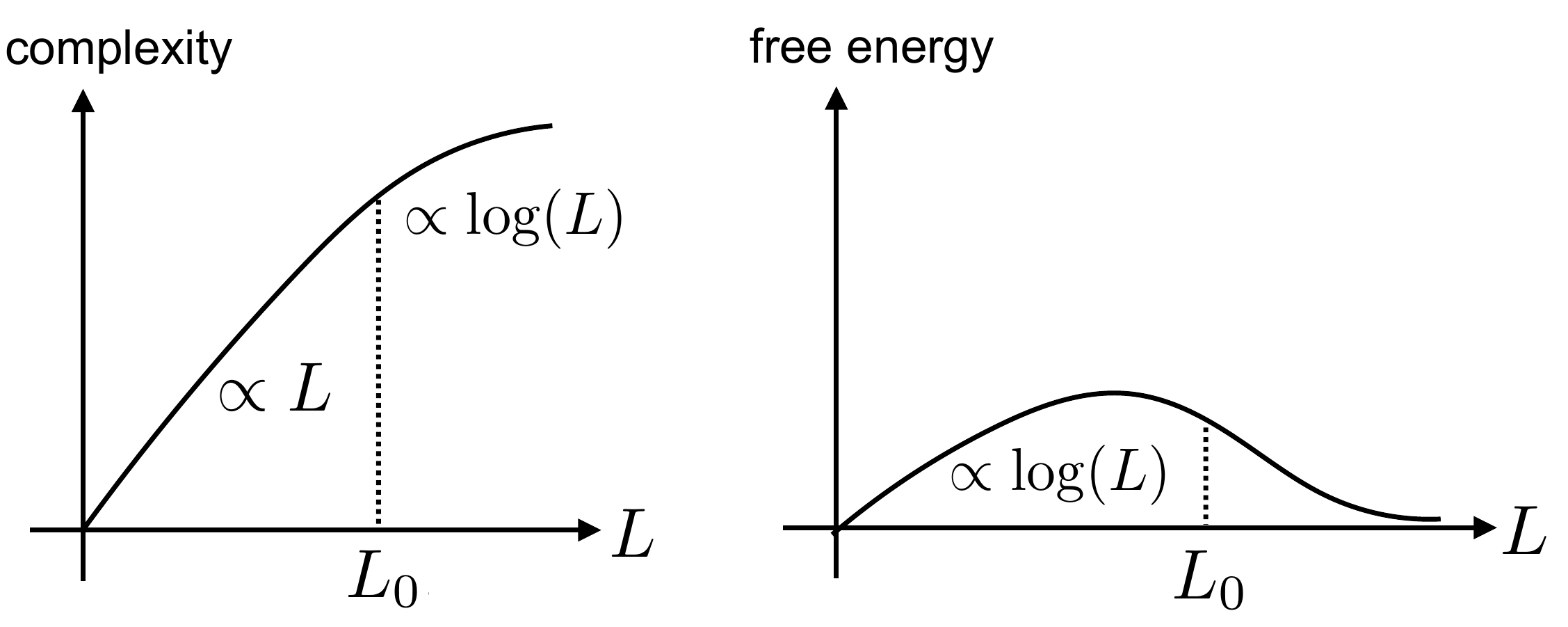}
\caption{Complexity of the Gibbs ensemble and free energy as a function of the system size $L$.}
\label{fig_complexity} 
\end{figure}

\subsection{Computational Simulation}

To verify the analytical result, the classical ferromagnetic sparse Ising model was simulated computationally. The Hamiltonian is $H = - \sum_{\langle i,j \rangle} Z_{i}Z_{j}$ where the linear size of the sparse graph is $L$ while inner squares have linear length of $\sqrt{L}$ ($\alpha = 1/2$). Numerical simulations were performed for values of $L$ between $4$ and $36$, at $\beta^{-1}$ from $0.65$ to $2$. In these numerics, the system is initialized in a completely ordered state, and transitions between successive states were governed by the Metropolis rule. The memory time is defined as the time required for the information stored initially to be compromised by at least 50\%,

As discussed in the previous section, the memory time is expected to be doubly exponential in the inverse temperature which can be very long and poses difficulty in numerical simulations. For this reason, we relied on Kinetic Monte Carlo methods using the Bortz-Kalos-Lebowitz (BKL) algorithm\cite{BKL75}, a variation on the standard Metropolis algorithm. In this scheme, at each step the sites are divided into equivalence classes defined by the energy change that would be caused by flipping the spin at that site. The energy costs are calculated \textit{a priori}, as well as the probabilities of once having flipped such a site that the new configuration is accepted. The BKL algorithm is particularly useful when the rates of spin flips are small.  

The data from these simulations are reported in Fig.~\ref{fig_data}. At each temperature there is an optimal system size, as conjectured by Michnicki and derived in Eq.~(\ref{eq:Lmax}). Thus the welded code is only marginally self-correcting, as at no $T > 0$ can one grow the system arbitrarily large to achieve longer memory times. Qualitatively, this is because in large systems, it is more entropically favorable for droplets of frustration to proliferate and destroy the order. The other major crucial feature, though, is the scaling of the memory time with the inverse temperature $\beta$ for each $L_\text{max}$. Because $L$ is not dense in the numerics, though, one cannot obtain to arbitrary precision the locations of the maxima at each $\beta$. The maxima were instead estimated by the highest data point available from each temperature. Nevertheless, it appears in Fig.~\ref{fig_time-vs-temp} that these data points fit a doubly exponential law well (with coefficient of determination $R^2 = 0.999567$) with respect to $\beta$. The likely implication is that the memory time is not so strongly peaked that an off-peak estimate of the peak is not off by a large factor.

\begin{figure}[htb!]
\centering
\includegraphics[scale=0.45]{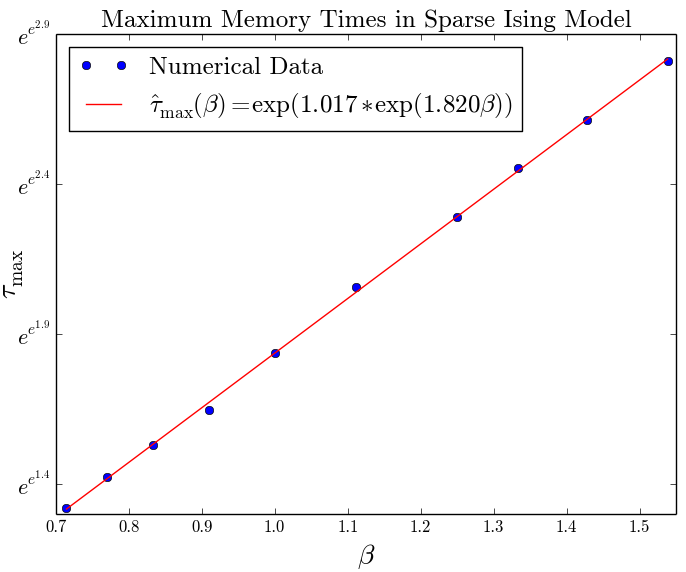}
\caption{$\tau_\text{max}$ as a function of $\beta$, where $\tau_\text{max}$ is the maximum rescaled memory time achieved at each temperature for the classical ferromagnetic sparse Ising model. Coefficient of determination $R^2=0.999567$ for this fit, which suggests that the double exponential prediction is accurate.}
\label{fig_time-vs-temp} 
\end{figure}

\section{Discussion}

In this paper, we proved the thermal instability of the cubic code and studied quantum memory properties of the welded code. In both arguments, we have utilized some variants of free energy arguments, further establishing the connection between self-correction and topological order at nonzero temperature.

Our result raises a fundamental question concerning non-triviality of three-dimensional thermal Gibbs ensemble $\rho(\beta)$, or in other words, whether one can create an arbitrary thermal ensemble efficiently in three dimensions or not. The potential for CSS codes to achieve self-correction or topological order is particularly interesting, as these appear to be closely linked to models in classical physics. Perhaps, a reasonable step toward this question is to ask whether Brell's code is topologically ordered at nonzero temperature or not. The code can be viewed as a four-dimensional toric code supported on a fractal lattice with Haussdorf dimension less than three. It has been argued that the code can be embedded into a three-dimensional space and the code has a finite transition temperature which is a strong indicative of the presence of some ``order'' at nonzero temperature. However such an embedding typically bring pairs of far-away sites to geometric proximity. 

Another interesting question concerns the presence/absence of topological order in an intermediate Coulomb-like phase. Namely, for the four-dimensional qudit toric code, there exists an intermediate temperature range where quantum memory time diverges only polynomially with respect to the system size as opposed to exponential divergence at low temperature phase~\cite{Beni14}. Whether such a phase is topologically ordered or not is an interesting question. 

Our treatment, as well as the original result by Hastings, applies only to commuting frustration-free Hamiltonians. This limitation is due to technical difficulties of studying thermal ensembles of generic gapped Hamiltonians where perturbative analysis is not readily applicable in general. For instance, whether the four-dimensional toric code with small perturbation is topologically ordered at nonzero temperature is currently not known. This may be an interesting future problem.

An imperfect Hamiltonian may be interesting by its own. As a side result, we showed that an imperfect Hamiltonian $H(p)$ of a three-dimensional Hamiltonian $H$ can be written in the form of ``lattice gauge theories'' where terms in $H(p)$ are either vertex and plaquette types after appropriate coarse-graining in appendix~\ref{sec:LGT}. Since lattice gauge theories often admit efficient characterizations via tensor product state (TPS) representations, our finding may hint that thermal Gibbs states for three-dimensional gapped spin systems can be efficiently studied by TPS formalism. 

Generalization of an ergodic decomposition of the Davies generator to arbitrary stabilizer codes is an interesting future problem. Namely, studies of such decompositions in the presence of geometric locality of stabilizer generators may further strengthens no-go results for self-correcting quantum memory in two dimensions.

\section*{Acknowledgment}

We would like to thank John Preskill and Kristan Temme for discussions and Olexei Motrunich for providing computational resources for numerics. This work is supported through Summer Undergraduate Research Fellowship (SURF) and NSF Grant PHY-1125565 at the California Institute of Technology. 
Research at Perimeter Institute is supported by the Government of Canada through Industry Canada and by the Province of Ontario through the Ministry of Research and Innovation.

\appendix

\section{Decomposition for fractals}\label{sec:fractal_decomposition}

The ergodic decomposition method applies to quantum fractal codes, and the problem is reduced to finding classical memory time of classical fractal spin models~\cite{Newman99, Beni11b}. Consider first the two-dimensional toric code represented by using polynomials: $\alpha=1+x$ and $\beta=1+y$. We show the decomposition in Fig.~\ref{fig_decomposition_toric} where squares represent lattice sites which consist of two qubits. (Qubits live on faces, not on edges). To construct the ergodic decomposition for quantum fractal codes, we repeat this decomposition in the $\hat{x}$ direction. That is, for each layer of the $\hat{y},\hat{z}$ plane, we use the decomposition shown in Fig.~\ref{fig_decomposition_toric}. This reduces the problem to that of finding a classical memory time of a classical fractal spin model supported on multiple layers of two-dimensional planes which are attached by forming a comb-like geometry. 

\begin{figure}[htb!]
\centering
\includegraphics[scale=0.23]{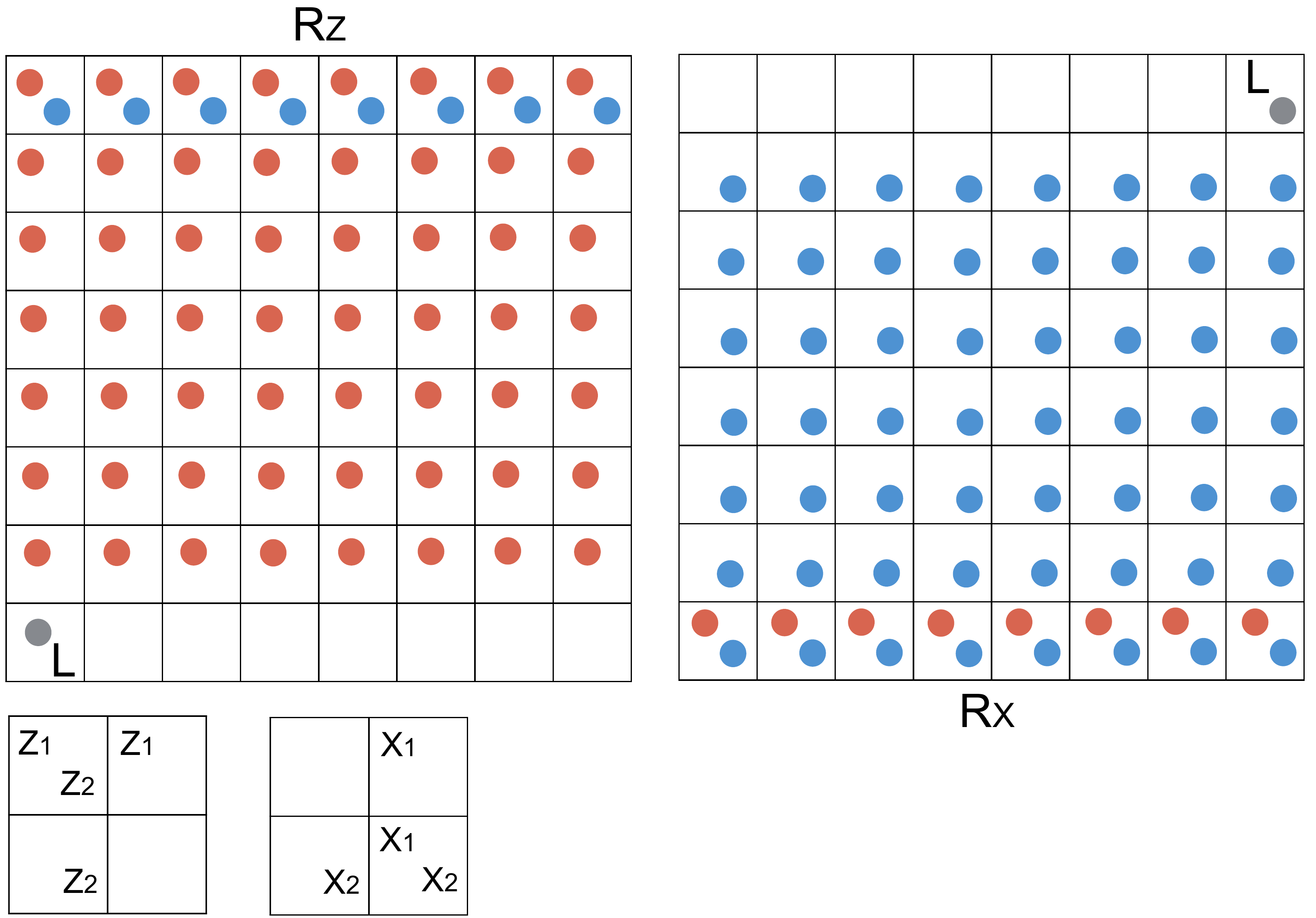}
\caption{Ergodic decomposition for the toric code in a polynomial representation.
} 
\label{fig_decomposition_toric}
\end{figure}

\section{Welding and gapped boundary}\label{sec:boundary}

Welding is a procedure of combining multiple CSS codes into a larger CSS code. This procedure has an interesting interpretation in terms of anyon condensations in gapped boundaries. To be concrete, consider a welding of three copies of the two-dimensional toric code along a welding surface as shown in Fig~\ref{fig_welding_boundary}. Here, ``welding'' refers to a certain procedure of modifying stabilizer generators on the welding surface so that the resulting code has a set of commuting stabilizer generators. 

\begin{figure}[htb!]
\centering
\includegraphics[scale=0.32]{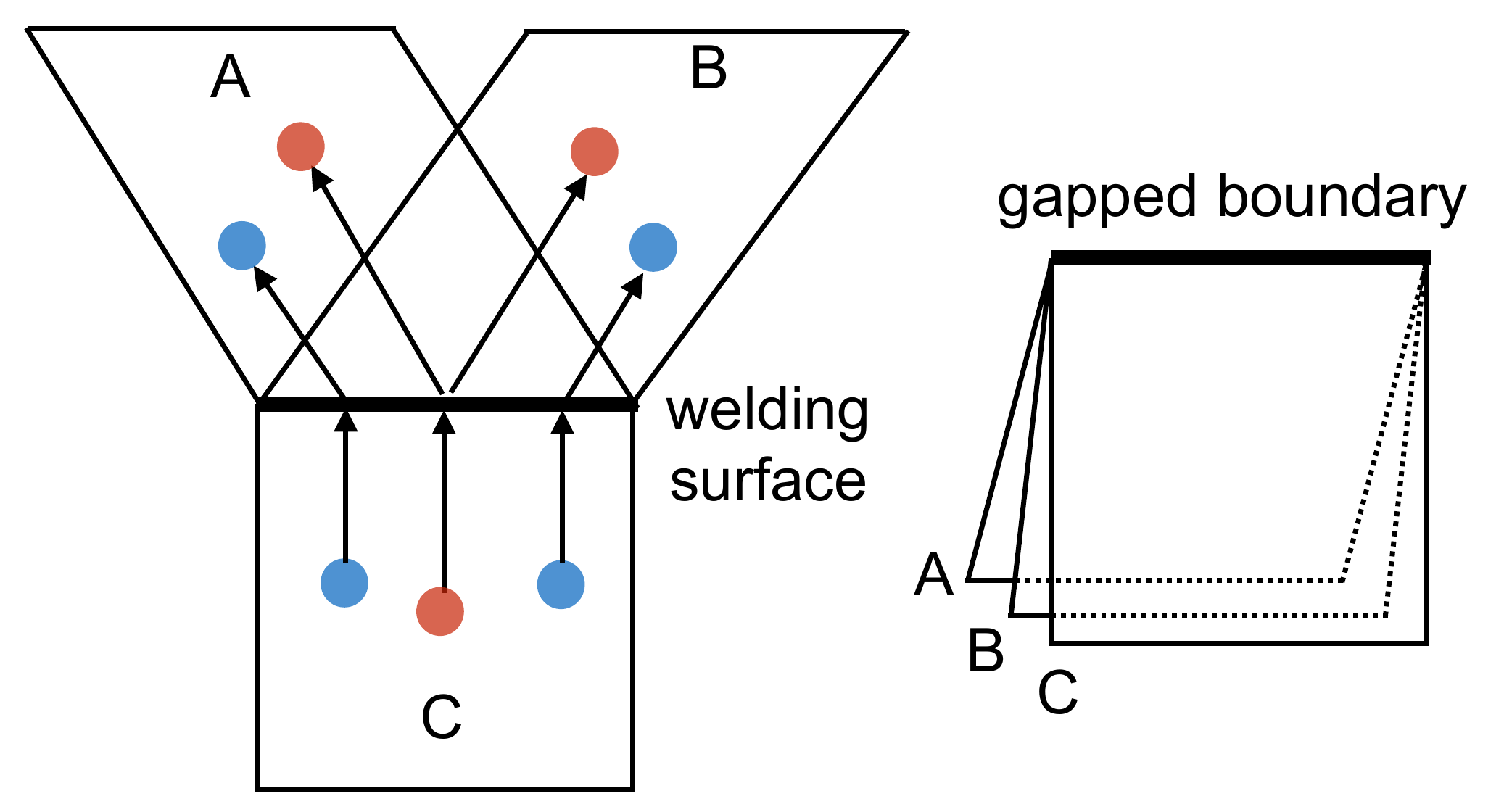}
\caption{Welding as a gapped boundary. 
} 
\label{fig_welding_boundary}
\end{figure}

There are two types of welding procedures, called $X$-type and $Z$-type weldings, which depends on how $X$-type and $Z$-type stabilizer generators are modified. In Fig~\ref{fig_welding_boundary}, we choose the $X$-type welding whose physical meaning is as follows. Let us denote three copies of the toric code by $A,B,C$, and associated charges and fluxes by $e_{A},e_{B},e_{C},m_{A},m_{B},m_{C}$. In this setting, an electric charge from $C$ bifurcates into two electric charges living on $A$ and $B$ respectively. On the other hand, a magnetic flux from each copy of the toric code can travel freely among $A,B,C$. More rigorously, an $X$-type weld of two CSS codes $S_1$ and $S_2$ is constructed as follows: pairs of qubits between $S_1$ and $S_2$ are identified and are each contracted to a single qubit and the new stabilizer group $S$ of the welded code is generated by all $Z$-type operators from $S_1$ and $S_2$ and all $X$-type operators that commute with all of those $Z$-type operators. Further remarks on the codespace can be found in Ref.~\citen{Michnicki14}. By repeating this procedure, one can construct a larger CSS code, which is like a web of two-dimensional toric code. Since electric charges must bifurcate at welding surfaces, one can introduce an energy barrier for propagations of electric charges. Michnicki has applied this trick to multiple copies of the three-dimensional toric code to obtain a quantum memory with polynomially diverging energy barrier. 

To discuss the welding procedure in the language of anyon condensations, it is convenient to fold the entire system along the welding surface as shown in Fig.~\ref{fig_welding_boundary}. Then, the welding surface can be viewed as a gapped boundary defined for three copies of the toric code $A,B,C$. In this interpretation, condensing anyons are
\begin{align}
e_{A}e_{B}e_{C},\quad m_{A}m_{B},\quad m_{B}m_{C}. \label{eq:condensation}
\end{align}
Fig.~\ref{fig_welding_app}(a) shows the terms near the welding surface explicitly. Note that the ergodic decomposition works in a similar manner as shown in Fig.~\ref{fig_welding_app}(b). It has been shown that the three-dimensional topological color code with boundaries is equivalent to three copies of the toric code which are attached by a common boundary whose anyon condensations are given by Eq.~(\ref{eq:condensation})~\cite{Kubica15b}. In other words, the three-dimensional color code can be viewed as an outcome of welding three copies of the three-dimensional color code. 

\begin{figure}[htb!]
\centering
\includegraphics[scale=0.26]{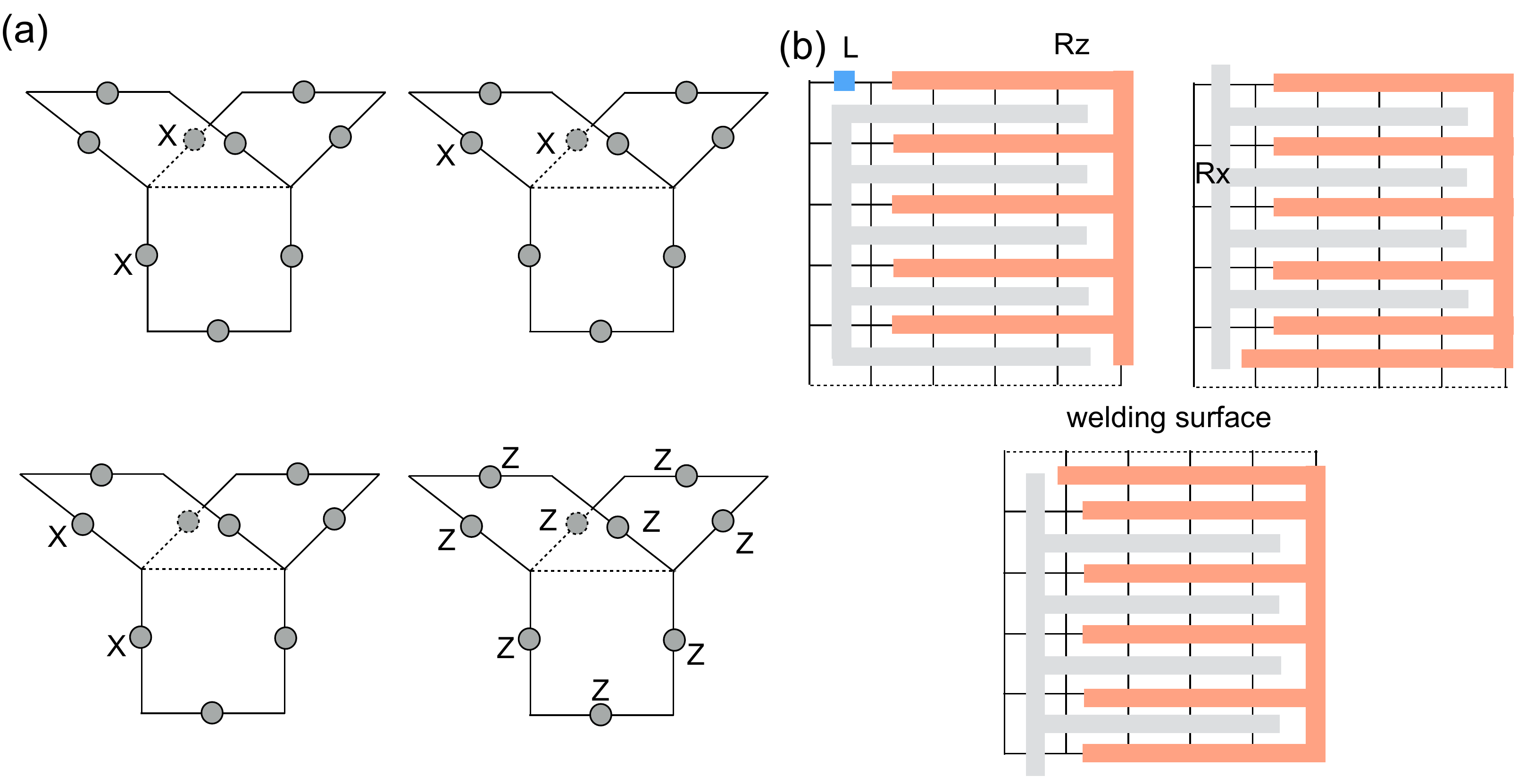}
\caption{(a) Terms near the welding surface. (b) The ergodic decomposition for the 2D welded code.
} 
\label{fig_welding_app}
\end{figure}

\section{More on welded code}

In this appendix, we sketch a proof that the welded code is not topologically ordered at nonzero temperature. A formal construction of the welded code is rather involved, so interested readers are encouraged to read the original paper. The proof can be obtained straightforwardly by examining the patterns of anyon propagations, which was reviewed in appendix~\ref{sec:boundary} from the perspectives of gapped domain walls. 

We begin by showing that the three-dimensional toric code is not topologically ordered at nonzero temperature. The proof was already given by Hastings. The model is defined as follows:
\begin{align}
H = - \sum_{v}A_{v} - \sum_{p}B_{p}
\end{align}
where $A_{v}$ is a Pauli-$X$ vertex term and $B_{p}$ is a Pauli-$Z$ plaquette term. Violations of $A_{v}$ correspond to point-like excitations, and are referred to as electric charges. Violations of $B_{p}$ correspond to loop-like excitations, and are referred to as magnetic fluxes. We then consider an imperfect Hamiltonian $H(p)$ by removing terms of the original Hamiltonian. One technical subtlety is that lemma~\ref{lemma_remove} does not apply to the three-dimensional toric code since there are too many redundant $B_{p}$ operators. However, one can easily modify the lemma by using the CSS property of the toric code. As such, it suffices to consider an imperfect Hamiltonian $H(p)$ where only the vertex terms $A_{v}$ are removed with probability $p$. We split the whole system into cubes of size $O(\log(L)^{1/3})$ which ensures that there is at least one sink of electric charge per cube with probability approaching to unity polynomially. Thus, the three-dimensional toric code is not topologically ordered at nonzero temperature. One can also prove that the three-dimensional toric code is not a self-correcting quantum memory by using the ergodic decomposition. Namely, Fig.~\ref{fig_3D_toric_decomposition} depicts the decomposition where Pauli $Z$ operators (creating electric charges) form multiple comb-like objects attached along a line, which have a finite energy barrier and are essentially described by one-dimensional Ising model, while Pauli $X$ operators (creating magnetic fluxes) form sheet-like objects which are attached like a comb, which have $O(L)$ energy barrier. 

\begin{figure}[htb!]
\centering
\includegraphics[scale=0.4]{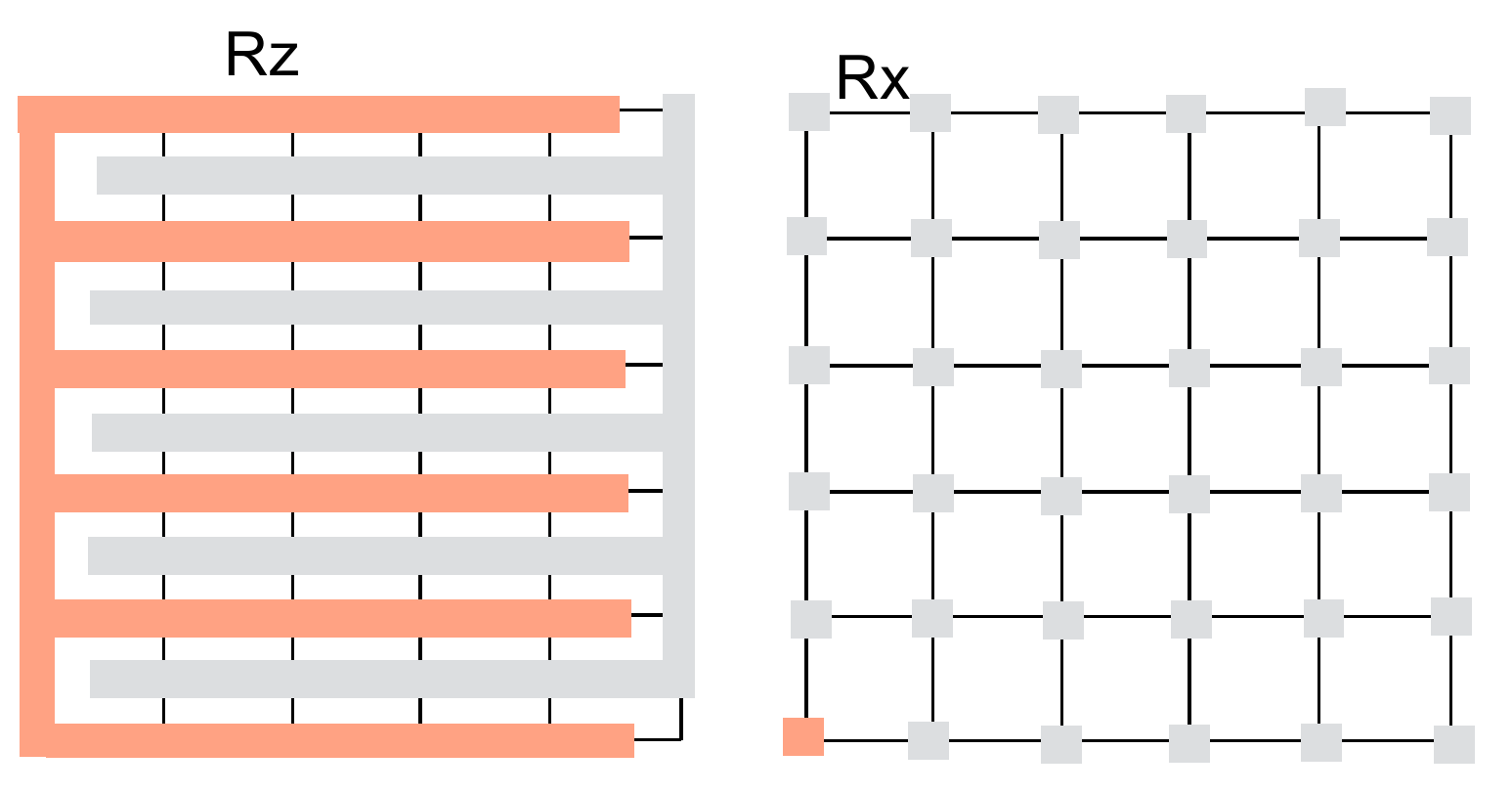}
\caption{The ergodic decomposition for the three-dimensional toric code. The figure shows only two layers of $(x,y)$-planes in the three-dimensional lattice. For the full decomposition, one needs to stack these layers in the $z$ direction.
} 
\label{fig_3D_toric_decomposition}
\end{figure}

The three-dimensional welded code can be constructed by gluing multiple copies of the three-dimensional toric code. As explained in the previous appendix, the wending procedure can be better understood from the paradigm of anyon condensations~\cite{Levin12}. Consider six copies of the three-dimensional toric code. We label their anyonic excitations by $e_{1},\ldots,e_{6}$ and $m_{1},\ldots,m_{6}$ where $e_{j}$ represent electric point-like charges and $m_{j}$ represent magnetic loop-like fluxes. Let us label three coordinates of the lattice by $x,y,z$. We choose boundaries for surfaces perpendicular to $x$ or $y$ such that magnetic fluxes condense. For the boundaries perpendicular to $z$, we construct joint boundaries involving six copies of the toric code. Namely, one can construct a gapped boundary for these six copies of the toric code such that the following anyonic excitations can condense into the boundary:
\begin{align}
(e_{1}\cdots e_{6}), \qquad (m_{1}m_{j}) \quad j=2,\ldots,6.
\end{align}
One can easily see that braiding statistics of condensing anyons are mutually bosonic.  Once the labels of condensing anyons are identified, it is easy to construct the corresponding gapped boundary explicitly, see~\cite{Beigi11, Beni15c} for instance. Note that a single electric charge, say $e_{1}$, cannot condense into the boundary. Instead, if $e_{1}$ moves to the boundary, then it will be reflected back as a composite $e_{2}e_{3}e_{4}e_{5}e_{6}$. Since an electric charge split into multiple excitations, it adds extra energy penalty for propagations of electric charges which are otherwise freely propagating. This joint boundary, involving six copies of the toric code, is indeed the welding surface which glue six copies of the toric code. 

The three-dimensional welded code is essentially a web of multiple copies of the three-dimensional toric code which are glued together with boundaries described above. To be more specific, consider the three-dimensional cubic (sparse) lattice where its linear length is $O(L)$ and its edges have length $O(L^{a})$ ($0<a<1$). (So, there are $O(L^{1-a})$ vertices in one direction of the sparse lattice). We then place the three-dimensional toric code of linear size $O(L^{a})$ on each edge of the lattice, and glue six copies of the toric code at boundaries which are facing to the same vertex of the sparse lattice. The gluing process requires deforming the shape of each of the toric code, but the point is that this process preserves locality of interaction terms as well as keeps the density of qubits finite~\cite{Michnicki14}. By performing such gluing at each vertex, we obtain the three-dimensional welded code. The energy barrier for a magnetic loop-like flux is $O(L^{a})$ which is determined by the barrier for the original toric code of size $O(L^a)$. The energy barrier for an electric charge is $O(L^{2(1-a)})$ since electric charges split into multiple charges once they cross the welding surfaces. By setting $a=2/3$, both excitations will have energy barrier $O(L^{2/3})$.

Let us finally show that the welded code is not topologically ordered at nonzero temperature. We shall split the system into blocks of $O(\log(L))$ qubits. If the box does not contain the welding surface, then all the excitations can be brought to the sink, and can be eliminated with high probability. When the box contains the welding surface, an extra care is necessary. In such cases, we consider six blocks, attached at the welding surface, as a single region. Then, each of six types of excitations from six copies of the toric code can be eliminated in each block. Therefore, all the excitations can be eliminated in a box of size $O(\log(L))$. We also make a brief comment on the ergodic decomposition in the three-dimensional welded code. On the bulk, away from the welding surfaces, the decompositions look like those for the three-dimensional toric code, as depicted in Fig.~\ref{fig_3D_toric_decomposition} while, on the welding surfaces, Pauli $X$ and $Z$ operators are connected to other cubes in a way similar to Fig.~\ref{fig_welding_app}(b). Here we are interested in the decomposition for Pauli-$Z$ operators. Away from the welding surface, Pauli-$Z$ operators form a comb-like object whose thermodynamic properties can be essentially studied as a one-dimensional Ising model. On the welding surface, their trajectory splits into multiple combs which gives energy penalty for electric charges. As a whole, Pauli-$Z$ operators form an object which is like a sparse Ising model.

\section{Reduction to lattice gauge theory}\label{sec:LGT}

In this section, we show that an imperfect realization $H(p)$ of a three-dimensional local Hamiltonian $H$ can be almost surely coarse-grained into the form of lattice gauge theories. By lattice gauge theories, we mean a class of interacting quantum spin systems where physical degrees of freedom live on edges of a graph and interaction terms are either vertex or plaquette types:
\begin{align}
H_{LGT} = - \sum_{v} \sum_{j} {A^{(v)}}_{j} - \sum_{p} \sum_{j} {B^{(p)}}_{j}
\end{align}
where ${A^{(v)}}_{j}$ acts only on edges radiating from a vertex $v$ and ${B^{(p)}}_{j}$ acts on edges that form a plaquette $p$. 

\begin{figure}[htb!]
\centering
\includegraphics[width=0.60\linewidth]{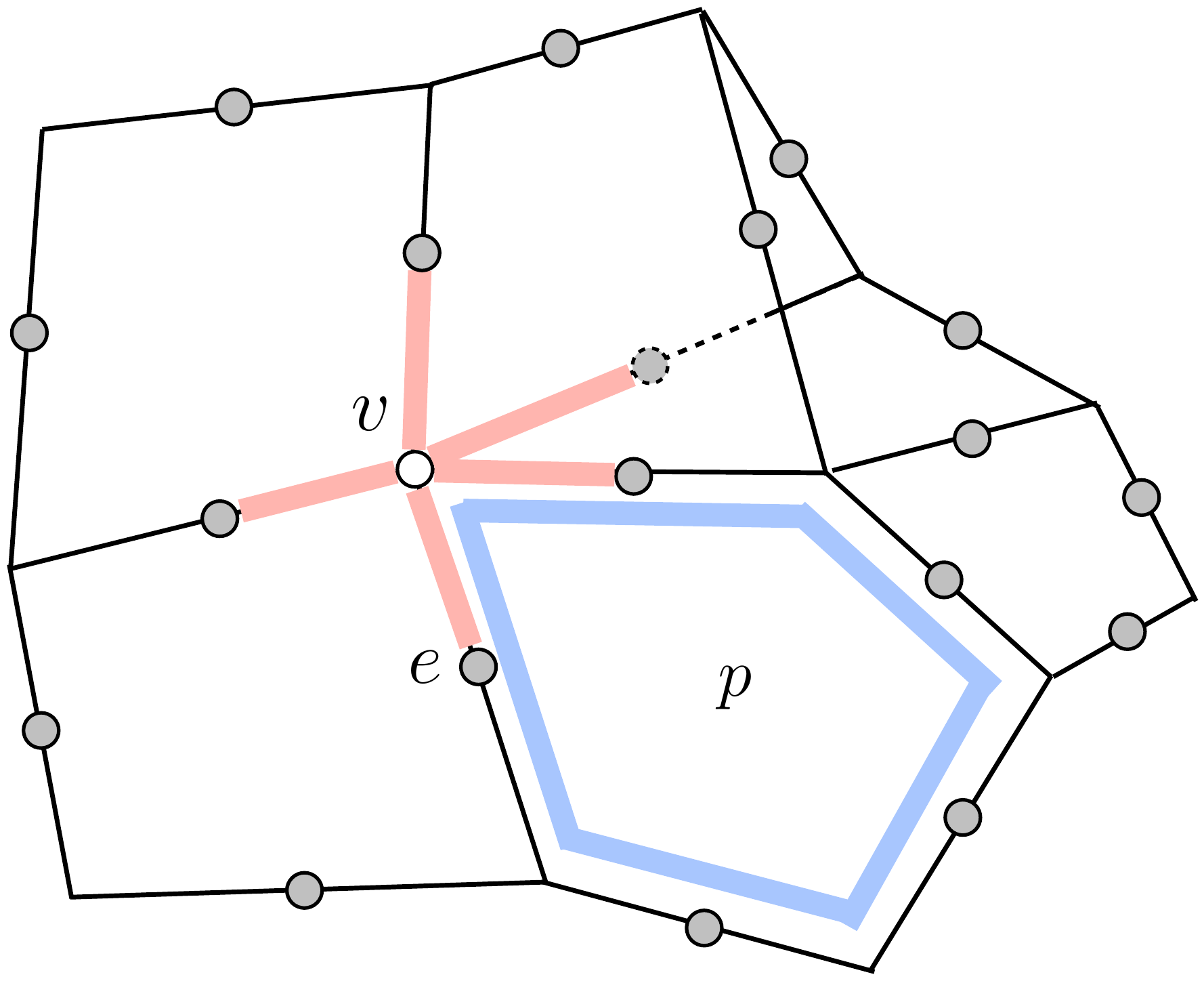}
\caption{A lattice gauge theory $H$ defined on a homogeneous graph. Composite spins live on edges of the graph. Red lines represent a vertex operator on a vertex $v$ and blue lines represent a plaquette operator on a plaquette $p$. 
} 
\label{fig_LGT}
\end{figure}

For a proof, one terminology needs to be introduced. Let $r>0$ be a range of interaction terms, meaning that each term is contained in some ball of radius $r$. By assumption, $r$ is finite. Consider a ball of finite radius $r'$ ($r'\gg r$) such that all the terms in $H(p)$ act trivially on the ball. We call such a ball of no interaction terms an immune sphere. Note that an immune sphere appear with some finite probability as long as $r'$ is finite. 

We begin by demonstrating a reduction to a lattice gauge theory for a certain simple realization of $H(p)$. Consider a cubic grid with spacing $d\gg r'$ on a lattice, denoted by $\Lambda_{grid}$. First, we study a special case where immune sphere are present at vertices of the grid. We draw lines connecting immune spheres and the centers of the cubes, then split the entire system into a collection of octahedral objects (Fig.~\ref{fig_coarse}(a)), each of which can be identified with a face of the cubic lattice. The system is coarse-grained so that degrees of freedom in each octahedral object form single composite spins which live on faces of the cubic grid as in Fig.~\ref{fig_coarse}(b). We then construct a cubic graph $\Lambda$ by connecting centers of unit cells in the grid so that composite particles live on edges of this graph.

\begin{figure}[htb!]
\centering
\includegraphics[width=1.0\linewidth]{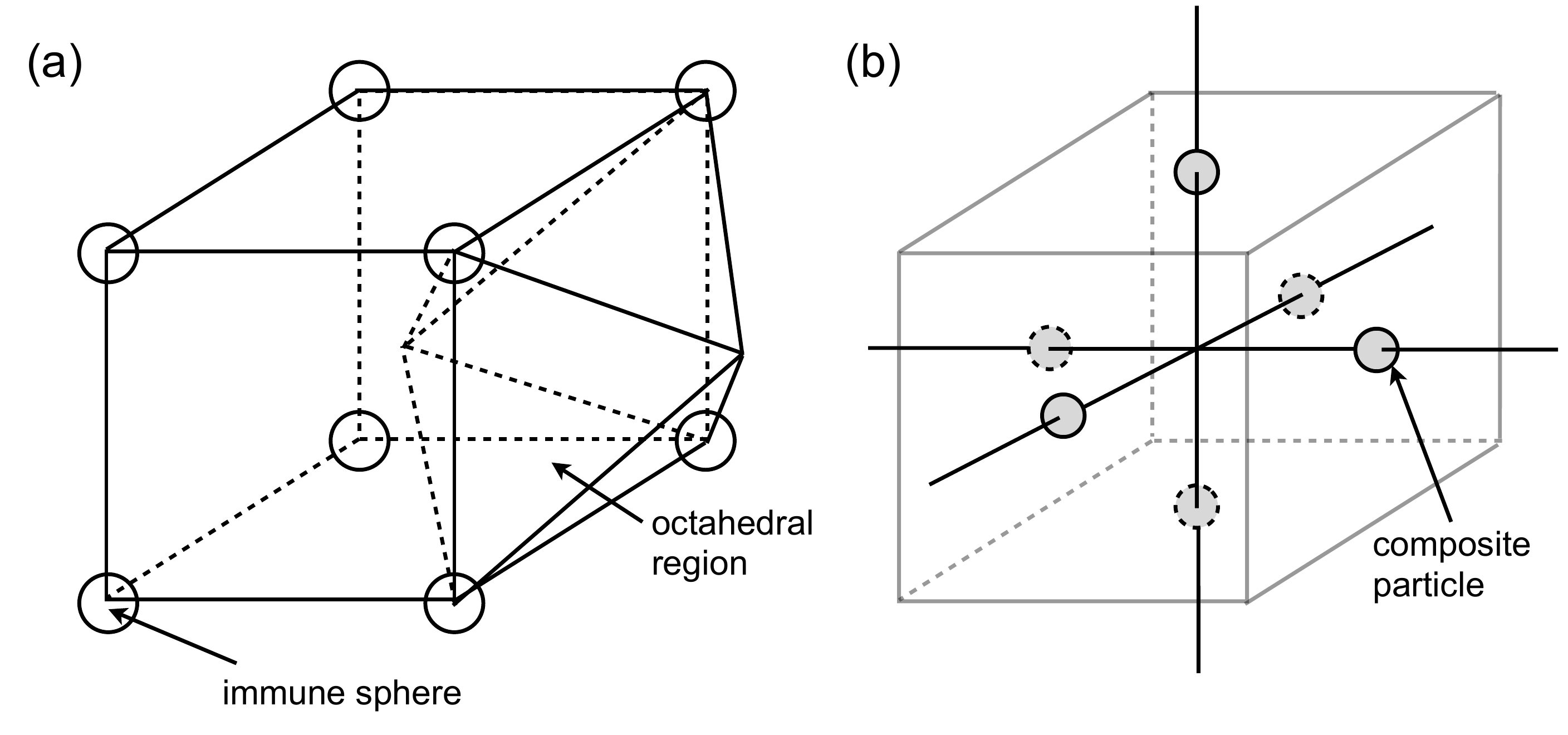}
\caption{(a) Immune spheres at vertices on a cubic grid. Shaded dots represent spheres at half-integer sites. We split the entire system into octahedral objects and view them as composite spins. (b) A construction of a lattice gauge theory. Composite spins live on faces of a cubic lattice. 
} 
\label{fig_coarse}
\end{figure}

We now show that interaction terms of a local Hamiltonian $H$ are either vertex or plaquette type. One can classify interaction terms into three types. (a) Terms which live near an edge of the cubic grid. (b) Terms which live near a face of the cubic grid. (c) Terms which live in the bulk of a cube. For (a), such terms can be considered as plaquette terms which are at most four-body. For (b), such terms are one-body operators. For (c), such terms can be considered as vertex terms which are at most six-body. Thus, the system can be viewed as a lattice gauge theory on a graph $\Lambda$.  

Let us treat a generic case where interaction terms are removed randomly with probability $p$. We split the full three-dimensional system into unit cubes of $d\times d\times d$ qubits where $d>0$ is some sufficiently large positive integer such that $d\gg r' \gg r$. For a given unit cube, the probability of finding at least one immune sphere of radius $r'$ is lower bounded by some finite constant $q>0$. By taking sufficiently large $d$ while keeping $r'$, one can make $q$ arbitrarily close to unity. One can connect immune spheres and construct a skewed cubic grid. We then use a coarse-graining strategy similar to the one used in the previous section. This reduces the system into a form of lattice gauge theories.

%


\begin{thebibliography}{37}%
\makeatletter
\providecommand \@ifxundefined [1]{%
 \@ifx{#1\undefined}
}%
\providecommand \@ifnum [1]{%
 \ifnum #1\expandafter \@firstoftwo
 \else \expandafter \@secondoftwo
 \fi
}%
\providecommand \@ifx [1]{%
 \ifx #1\expandafter \@firstoftwo
 \else \expandafter \@secondoftwo
 \fi
}%
\providecommand \natexlab [1]{#1}%
\providecommand \enquote  [1]{``#1''}%
\providecommand \bibnamefont  [1]{#1}%
\providecommand \bibfnamefont [1]{#1}%
\providecommand \citenamefont [1]{#1}%
\providecommand \href@noop [0]{\@secondoftwo}%
\providecommand \href [0]{\begingroup \@sanitize@url \@href}%
\providecommand \@href[1]{\@@startlink{#1}\@@href}%
\providecommand \@@href[1]{\endgroup#1\@@endlink}%
\providecommand \@sanitize@url [0]{\catcode `\\12\catcode `\$12\catcode
  `\&12\catcode `\#12\catcode `\^12\catcode `\_12\catcode `\%12\relax}%
\providecommand \@@startlink[1]{}%
\providecommand \@@endlink[0]{}%
\providecommand \url  [0]{\begingroup\@sanitize@url \@url }%
\providecommand \@url [1]{\endgroup\@href {#1}{\urlprefix }}%
\providecommand \urlprefix  [0]{URL }%
\providecommand \Eprint [0]{\href }%
\providecommand \doibase [0]{http://dx.doi.org/}%
\providecommand \selectlanguage [0]{\@gobble}%
\providecommand \bibinfo  [0]{\@secondoftwo}%
\providecommand \bibfield  [0]{\@secondoftwo}%
\providecommand \translation [1]{[#1]}%
\providecommand \BibitemOpen [0]{}%
\providecommand \bibitemStop [0]{}%
\providecommand \bibitemNoStop [0]{.\EOS\space}%
\providecommand \EOS [0]{\spacefactor3000\relax}%
\providecommand \BibitemShut  [1]{\csname bibitem#1\endcsname}%
\let\auto@bib@innerbib\@empty
\bibitem [{\citenamefont {Dennis}\ \emph {et~al.}(2002)\citenamefont {Dennis},
  \citenamefont {Kitaev}, \citenamefont {Landahl},\ and\ \citenamefont
  {Preskill}}]{Dennis02}%
  \BibitemOpen
  \bibfield  {author} {\bibinfo {author} {\bibfnamefont {E.}~\bibnamefont
  {Dennis}}, \bibinfo {author} {\bibfnamefont {A.}~\bibnamefont {Kitaev}},
  \bibinfo {author} {\bibfnamefont {A.}~\bibnamefont {Landahl}}, \ and\
  \bibinfo {author} {\bibfnamefont {J.}~\bibnamefont {Preskill}},\ }\href@noop
  {} {\bibfield  {journal} {\bibinfo  {journal} {J. Math. Phys.}\ }\textbf
  {\bibinfo {volume} {43}},\ \bibinfo {pages} {4452} (\bibinfo {year}
  {2002})}\BibitemShut {NoStop}%
\bibitem [{\citenamefont {Brell}(2016)}]{Brell16}%
  \BibitemOpen
  \bibfield  {author} {\bibinfo {author} {\bibfnamefont {C.~G.}\ \bibnamefont
  {Brell}},\ }\href@noop {} {\bibfield  {journal} {\bibinfo  {journal} {New. J.
  Phys.}\ }\textbf {\bibinfo {volume} {18}},\ \bibinfo {pages} {013050}
  (\bibinfo {year} {2016})}\BibitemShut {NoStop}%
\bibitem [{\citenamefont {Brown}\ \emph {et~al.}(2014)\citenamefont {Brown},
  \citenamefont {Loss}, \citenamefont {Pachos}, \citenamefont {Self},\ and\
  \citenamefont {Wootton}}]{Brown14_review}%
  \BibitemOpen
  \bibfield  {author} {\bibinfo {author} {\bibfnamefont {B.~J.}\ \bibnamefont
  {Brown}}, \bibinfo {author} {\bibfnamefont {D.}~\bibnamefont {Loss}},
  \bibinfo {author} {\bibfnamefont {J.~K.}\ \bibnamefont {Pachos}}, \bibinfo
  {author} {\bibfnamefont {C.~N.}\ \bibnamefont {Self}}, \ and\ \bibinfo
  {author} {\bibfnamefont {J.~R.}\ \bibnamefont {Wootton}},\ }\href@noop {}
  {\enquote {\bibinfo {title} {Quantum memories at finite temperature},}\
  }\bibinfo {howpublished} {arXiv:1411.6643} (\bibinfo {year}
  {2014})\BibitemShut {NoStop}%
\bibitem [{\citenamefont {Landon-Cardinal}\ \emph {et~al.}(2015)\citenamefont
  {Landon-Cardinal}, \citenamefont {Yoshida}, \citenamefont {Poulin},\ and\
  \citenamefont {Preskill}}]{Landon-Cardinal15}%
  \BibitemOpen
  \bibfield  {author} {\bibinfo {author} {\bibfnamefont {O.}~\bibnamefont
  {Landon-Cardinal}}, \bibinfo {author} {\bibfnamefont {B.}~\bibnamefont
  {Yoshida}}, \bibinfo {author} {\bibfnamefont {D.}~\bibnamefont {Poulin}}, \
  and\ \bibinfo {author} {\bibfnamefont {J.}~\bibnamefont {Preskill}},\
  }\href@noop {} {\bibfield  {journal} {\bibinfo  {journal} {Phys. Rev. A}\
  }\textbf {\bibinfo {volume} {91}},\ \bibinfo {pages} {032303} (\bibinfo
  {year} {2015})}\BibitemShut {NoStop}%
\bibitem [{\citenamefont {Haah}(2011)}]{Haah11}%
  \BibitemOpen
  \bibfield  {author} {\bibinfo {author} {\bibfnamefont {J.}~\bibnamefont
  {Haah}},\ }\href@noop {} {\bibfield  {journal} {\bibinfo  {journal} {Phys.
  Rev. A}\ }\textbf {\bibinfo {volume} {83}},\ \bibinfo {pages} {042330}
  (\bibinfo {year} {2011})}\BibitemShut {NoStop}%
\bibitem [{\citenamefont {Michnicki}(2014)}]{Michnicki14}%
  \BibitemOpen
  \bibfield  {author} {\bibinfo {author} {\bibfnamefont {K.~P.}\ \bibnamefont
  {Michnicki}},\ }\href@noop {} {\bibfield  {journal} {\bibinfo  {journal}
  {Phys. Rev. Lett.}\ }\textbf {\bibinfo {volume} {113}},\ \bibinfo {pages}
  {130501} (\bibinfo {year} {2014})}\BibitemShut {NoStop}%
\bibitem [{\citenamefont {Bravyi}\ and\ \citenamefont {Haah}(2013)}]{Bravyi13}%
  \BibitemOpen
  \bibfield  {author} {\bibinfo {author} {\bibfnamefont {S.}~\bibnamefont
  {Bravyi}}\ and\ \bibinfo {author} {\bibfnamefont {J.}~\bibnamefont {Haah}},\
  }\href@noop {} {\bibfield  {journal} {\bibinfo  {journal} {Phys. Rev. Lett.}\
  }\textbf {\bibinfo {volume} {111}},\ \bibinfo {pages} {200501} (\bibinfo
  {year} {2013})}\BibitemShut {NoStop}%
\bibitem [{\citenamefont {Kim}(2011)}]{Kim11}%
  \BibitemOpen
  \bibfield  {author} {\bibinfo {author} {\bibfnamefont {I.~H.}\ \bibnamefont
  {Kim}},\ }\href@noop {} {\bibfield  {journal} {\bibinfo  {journal} {Phys.
  Rev. A}\ }\textbf {\bibinfo {volume} {83}},\ \bibinfo {pages} {052308}
  (\bibinfo {year} {2011})}\BibitemShut {NoStop}%
\bibitem [{\citenamefont {Pedrocchi}\ \emph {et~al.}(2013)\citenamefont
  {Pedrocchi}, \citenamefont {Hutter}, \citenamefont {Wootton},\ and\
  \citenamefont {Loss}}]{Pedrocchi13}%
  \BibitemOpen
  \bibfield  {author} {\bibinfo {author} {\bibfnamefont {F.~L.}\ \bibnamefont
  {Pedrocchi}}, \bibinfo {author} {\bibfnamefont {A.}~\bibnamefont {Hutter}},
  \bibinfo {author} {\bibfnamefont {J.~R.}\ \bibnamefont {Wootton}}, \ and\
  \bibinfo {author} {\bibfnamefont {D.}~\bibnamefont {Loss}},\ }\href@noop {}
  {\bibfield  {journal} {\bibinfo  {journal} {Phys. Rev. A}\ }\textbf {\bibinfo
  {volume} {88}},\ \bibinfo {pages} {062313} (\bibinfo {year}
  {2013})}\BibitemShut {NoStop}%
\bibitem [{\citenamefont {Bardyn}\ and\ \citenamefont
  {Karzig}(2015)}]{Bardyn16}%
  \BibitemOpen
  \bibfield  {author} {\bibinfo {author} {\bibfnamefont {C.-E.}\ \bibnamefont
  {Bardyn}}\ and\ \bibinfo {author} {\bibfnamefont {T.}~\bibnamefont
  {Karzig}},\ }\href@noop {} {\enquote {\bibinfo {title} {Exponential lifetime
  improvement in topological quantum memories},}\ }\bibinfo {howpublished}
  {arXiv:1512.04528} (\bibinfo {year} {2015})\BibitemShut {NoStop}%
\bibitem [{\citenamefont {Bravyi}\ and\ \citenamefont
  {Terhal}(2009)}]{Bravyi09}%
  \BibitemOpen
  \bibfield  {author} {\bibinfo {author} {\bibfnamefont {S.}~\bibnamefont
  {Bravyi}}\ and\ \bibinfo {author} {\bibfnamefont {B.}~\bibnamefont
  {Terhal}},\ }\href@noop {} {\bibfield  {journal} {\bibinfo  {journal} {New.
  J. Phys.}\ }\textbf {\bibinfo {volume} {11}},\ \bibinfo {pages} {043029}
  (\bibinfo {year} {2009})}\BibitemShut {NoStop}%
\bibitem [{\citenamefont {Haah}\ and\ \citenamefont {Preskill}()}]{Haah10}%
  \BibitemOpen
  \bibfield  {author} {\bibinfo {author} {\bibfnamefont {J.}~\bibnamefont
  {Haah}}\ and\ \bibinfo {author} {\bibfnamefont {J.}~\bibnamefont
  {Preskill}},\ }\href@noop {} {}\bibinfo {howpublished}
  {arXiv:1011.3529}\BibitemShut {NoStop}%
\bibitem [{\citenamefont {Landon-Cardinal}\ and\ \citenamefont
  {Poulin}(2013)}]{Landon-Cardinal13}%
  \BibitemOpen
  \bibfield  {author} {\bibinfo {author} {\bibfnamefont {O.}~\bibnamefont
  {Landon-Cardinal}}\ and\ \bibinfo {author} {\bibfnamefont {D.}~\bibnamefont
  {Poulin}},\ }\href@noop {} {\bibfield  {journal} {\bibinfo  {journal} {Phys.
  Rev. Lett.}\ }\textbf {\bibinfo {volume} {110}},\ \bibinfo {pages} {090502}
  (\bibinfo {year} {2013})}\BibitemShut {NoStop}%
\bibitem [{\citenamefont {Pastawski}\ and\ \citenamefont
  {Yoshida}(2015)}]{Pastawski15}%
  \BibitemOpen
  \bibfield  {author} {\bibinfo {author} {\bibfnamefont {F.}~\bibnamefont
  {Pastawski}}\ and\ \bibinfo {author} {\bibfnamefont {B.}~\bibnamefont
  {Yoshida}},\ }\href@noop {} {\bibfield  {journal} {\bibinfo  {journal} {Phys.
  Rev. A}\ }\textbf {\bibinfo {volume} {91}},\ \bibinfo {pages} {012305}
  (\bibinfo {year} {2015})}\BibitemShut {NoStop}%
\bibitem [{\citenamefont {Yoshida}(2011)}]{Beni11}%
  \BibitemOpen
  \bibfield  {author} {\bibinfo {author} {\bibfnamefont {B.}~\bibnamefont
  {Yoshida}},\ }\href@noop {} {\bibfield  {journal} {\bibinfo  {journal} {Ann.
  Phys.}\ }\textbf {\bibinfo {volume} {326}},\ \bibinfo {pages} {2566}
  (\bibinfo {year} {2011})}\BibitemShut {NoStop}%
\bibitem [{\citenamefont {Alicki}\ \emph {et~al.}(2009)\citenamefont {Alicki},
  \citenamefont {Fannes},\ and\ \citenamefont {Horodecki}}]{Alicki09}%
  \BibitemOpen
  \bibfield  {author} {\bibinfo {author} {\bibfnamefont {R.}~\bibnamefont
  {Alicki}}, \bibinfo {author} {\bibfnamefont {M.}~\bibnamefont {Fannes}}, \
  and\ \bibinfo {author} {\bibfnamefont {M.}~\bibnamefont {Horodecki}},\
  }\href@noop {} {\bibfield  {journal} {\bibinfo  {journal} {J. Phys. A: Math.
  Gen.}\ }\textbf {\bibinfo {volume} {42}},\ \bibinfo {pages} {065303}
  (\bibinfo {year} {2009})}\BibitemShut {NoStop}%
\bibitem [{\citenamefont {Komar}\ \emph {et~al.}(2016)\citenamefont {Komar},
  \citenamefont {Landon-Cardinal},\ and\ \citenamefont {Temme}}]{Komar16}%
  \BibitemOpen
  \bibfield  {author} {\bibinfo {author} {\bibfnamefont {A.}~\bibnamefont
  {Komar}}, \bibinfo {author} {\bibfnamefont {O.}~\bibnamefont
  {Landon-Cardinal}}, \ and\ \bibinfo {author} {\bibfnamefont {K.}~\bibnamefont
  {Temme}},\ }\href@noop {} {\enquote {\bibinfo {title} {Self correction
  requires energy barrier for abelian quantum doubles},}\ }\bibinfo
  {howpublished} {arXiv:1601.01324} (\bibinfo {year} {2016})\BibitemShut
  {NoStop}%
\bibitem [{\citenamefont {Alicki}\ \emph {et~al.}(2010)\citenamefont {Alicki},
  \citenamefont {Horodecki}, \citenamefont {Horodecki},\ and\ \citenamefont
  {Horodecki}}]{Alicki10}%
  \BibitemOpen
  \bibfield  {author} {\bibinfo {author} {\bibfnamefont {R.}~\bibnamefont
  {Alicki}}, \bibinfo {author} {\bibfnamefont {M.}~\bibnamefont {Horodecki}},
  \bibinfo {author} {\bibfnamefont {P.}~\bibnamefont {Horodecki}}, \ and\
  \bibinfo {author} {\bibfnamefont {R.}~\bibnamefont {Horodecki}},\ }\href@noop
  {} {\bibfield  {journal} {\bibinfo  {journal} {Open Syst. Inf. Dyn.}\
  }\textbf {\bibinfo {volume} {17}},\ \bibinfo {pages} {1} (\bibinfo {year}
  {2010})}\BibitemShut {NoStop}%
\bibitem [{\citenamefont {Hastings}(2011)}]{Hastings11}%
  \BibitemOpen
  \bibfield  {author} {\bibinfo {author} {\bibfnamefont {M.~B.}\ \bibnamefont
  {Hastings}},\ }\href@noop {} {\bibfield  {journal} {\bibinfo  {journal}
  {Phys. Rev. Lett.}\ }\textbf {\bibinfo {volume} {107}},\ \bibinfo {pages}
  {210501} (\bibinfo {year} {2011})}\BibitemShut {NoStop}%
\bibitem [{\citenamefont {Chen}\ \emph {et~al.}(2010)\citenamefont {Chen},
  \citenamefont {Gu},\ and\ \citenamefont {Wen}}]{Chen10}%
  \BibitemOpen
  \bibfield  {author} {\bibinfo {author} {\bibfnamefont {X.}~\bibnamefont
  {Chen}}, \bibinfo {author} {\bibfnamefont {Z.-C.}\ \bibnamefont {Gu}}, \ and\
  \bibinfo {author} {\bibfnamefont {X.-G.}\ \bibnamefont {Wen}},\ }\href@noop
  {} {\bibfield  {journal} {\bibinfo  {journal} {Phys. Rev. B}\ }\textbf
  {\bibinfo {volume} {82}},\ \bibinfo {pages} {155138} (\bibinfo {year}
  {2010})}\BibitemShut {NoStop}%
\bibitem [{\citenamefont {Peierls}(1936)}]{Peierls36}%
  \BibitemOpen
  \bibfield  {author} {\bibinfo {author} {\bibfnamefont {R.}~\bibnamefont
  {Peierls}},\ }\href {\doibase 10.1017/S0305004100019174} {\bibfield
  {journal} {\bibinfo  {journal} {Mathematical Proceedings of the Cambridge
  Philosophical Society}\ }\textbf {\bibinfo {volume} {32}},\ \bibinfo {pages}
  {477} (\bibinfo {year} {1936})}\BibitemShut {NoStop}%
\bibitem [{\citenamefont {Haah}(2013)}]{Haah13}%
  \BibitemOpen
  \bibfield  {author} {\bibinfo {author} {\bibfnamefont {J.}~\bibnamefont
  {Haah}},\ }\href {\doibase 10.1007/s00220-013-1810-2} {\bibfield  {journal}
  {\bibinfo  {journal} {Comm. Math. Phys.}\ }\textbf {\bibinfo {volume}
  {324}},\ \bibinfo {pages} {351} (\bibinfo {year} {2013})}\BibitemShut
  {NoStop}%
\bibitem [{\citenamefont {Yoshida}(2013{\natexlab{a}})}]{Beni13}%
  \BibitemOpen
  \bibfield  {author} {\bibinfo {author} {\bibfnamefont {B.}~\bibnamefont
  {Yoshida}},\ }\href@noop {} {\bibfield  {journal} {\bibinfo  {journal} {Phys.
  Rev. B}\ }\textbf {\bibinfo {volume} {88}},\ \bibinfo {pages} {125122}
  (\bibinfo {year} {2013}{\natexlab{a}})}\BibitemShut {NoStop}%
\bibitem [{\citenamefont {Davies}(1974)}]{Davies74}%
  \BibitemOpen
  \bibfield  {author} {\bibinfo {author} {\bibfnamefont {E.~B.}\ \bibnamefont
  {Davies}},\ }\href@noop {} {\bibfield  {journal} {\bibinfo  {journal} {Comm.
  Math. Phys.}\ }\textbf {\bibinfo {volume} {39}},\ \bibinfo {pages} {91}
  (\bibinfo {year} {1974})}\BibitemShut {NoStop}%
\bibitem [{\citenamefont {Spohn}(1977)}]{Spohn77}%
  \BibitemOpen
  \bibfield  {author} {\bibinfo {author} {\bibfnamefont {H.}~\bibnamefont
  {Spohn}},\ }\href@noop {} {\bibfield  {journal} {\bibinfo  {journal} {Lett.
  Math. Phys.}\ }\textbf {\bibinfo {volume} {2}},\ \bibinfo {pages} {33}
  (\bibinfo {year} {1977})}\BibitemShut {NoStop}%
\bibitem [{\citenamefont {Frigerio}(1978)}]{Frigerio78}%
  \BibitemOpen
  \bibfield  {author} {\bibinfo {author} {\bibfnamefont {A.}~\bibnamefont
  {Frigerio}},\ }\href@noop {} {\bibfield  {journal} {\bibinfo  {journal}
  {Comm. Math. Phys.}\ }\textbf {\bibinfo {volume} {63}},\ \bibinfo {pages}
  {269} (\bibinfo {year} {1978})}\BibitemShut {NoStop}%
\bibitem [{\citenamefont {Yoshida}\ and\ \citenamefont
  {Chuang}(2010)}]{Beni10}%
  \BibitemOpen
  \bibfield  {author} {\bibinfo {author} {\bibfnamefont {B.}~\bibnamefont
  {Yoshida}}\ and\ \bibinfo {author} {\bibfnamefont {I.~L.}\ \bibnamefont
  {Chuang}},\ }\href@noop {} {\bibfield  {journal} {\bibinfo  {journal} {Phys.
  Rev. A}\ }\textbf {\bibinfo {volume} {81}},\ \bibinfo {pages} {052302}
  (\bibinfo {year} {2010})}\BibitemShut {NoStop}%
\bibitem [{\citenamefont {Martinelli}(1999)}]{Martinelli}%
  \BibitemOpen
  \bibfield  {author} {\bibinfo {author} {\bibfnamefont {F.}~\bibnamefont
  {Martinelli}},\ }in\ \href@noop {} {\emph {\bibinfo {booktitle} {Lectures on
  probability theory and statistics (Saint-Flour, 1997)}}}\ (\bibinfo
  {publisher} {Springer Berlin},\ \bibinfo {year} {1999})\ pp.\ \bibinfo
  {pages} {93--191}\BibitemShut {NoStop}%
\bibitem [{\citenamefont {Temme}(2014)}]{Temme14}%
  \BibitemOpen
  \bibfield  {author} {\bibinfo {author} {\bibfnamefont {K.}~\bibnamefont
  {Temme}},\ }\href@noop {} {\enquote {\bibinfo {title} {Thermalization time
  bounds for pauli stabilizer hamiltonians},}\ }\bibinfo {howpublished}
  {arXiv:1412.2858} (\bibinfo {year} {2014})\BibitemShut {NoStop}%
\bibitem [{\citenamefont {Bortz}\ \emph {et~al.}(1975)\citenamefont {Bortz},
  \citenamefont {Kalos},\ and\ \citenamefont {Lebowitz}}]{BKL75}%
  \BibitemOpen
  \bibfield  {author} {\bibinfo {author} {\bibfnamefont {A.}~\bibnamefont
  {Bortz}}, \bibinfo {author} {\bibfnamefont {M.}~\bibnamefont {Kalos}}, \ and\
  \bibinfo {author} {\bibfnamefont {J.}~\bibnamefont {Lebowitz}},\ }\href
  {\doibase http://dx.doi.org/10.1016/0021-9991(75)90060-1} {\bibfield
  {journal} {\bibinfo  {journal} {Journal of Computational Physics}\ }\textbf
  {\bibinfo {volume} {17}},\ \bibinfo {pages} {10 } (\bibinfo {year}
  {1975})}\BibitemShut {NoStop}%
\bibitem [{\citenamefont {Yoshida}()}]{Beni14}%
  \BibitemOpen
  \bibfield  {author} {\bibinfo {author} {\bibfnamefont {B.}~\bibnamefont
  {Yoshida}},\ }\href@noop {} {\enquote {\bibinfo {title} {Violation of the
  arrhenius law below the transition temperature},}\ }\Eprint
  {http://arxiv.org/abs/arXiv:1404.0457} {arXiv:1404.0457} \BibitemShut
  {NoStop}%
\bibitem [{\citenamefont {Newman}\ and\ \citenamefont
  {Moore}(1999)}]{Newman99}%
  \BibitemOpen
  \bibfield  {author} {\bibinfo {author} {\bibfnamefont {M.~E.~J.}\
  \bibnamefont {Newman}}\ and\ \bibinfo {author} {\bibfnamefont
  {C.}~\bibnamefont {Moore}},\ }\href@noop {} {\bibfield  {journal} {\bibinfo
  {journal} {Phys. Rev. E}\ }\textbf {\bibinfo {volume} {60}},\ \bibinfo
  {pages} {5068} (\bibinfo {year} {1999})}\BibitemShut {NoStop}%
\bibitem [{\citenamefont {Yoshida}(2013{\natexlab{b}})}]{Beni11b}%
  \BibitemOpen
  \bibfield  {author} {\bibinfo {author} {\bibfnamefont {B.}~\bibnamefont
  {Yoshida}},\ }\href@noop {} {\bibfield  {journal} {\bibinfo  {journal} {Ann.
  Phys. (NY)}\ }\textbf {\bibinfo {volume} {338}},\ \bibinfo {pages} {134}
  (\bibinfo {year} {2013}{\natexlab{b}})}\BibitemShut {NoStop}%
\bibitem [{\citenamefont {Kubica}\ \emph {et~al.}(2015)\citenamefont {Kubica},
  \citenamefont {Yoshida},\ and\ \citenamefont {Pastawski}}]{Kubica15b}%
  \BibitemOpen
  \bibfield  {author} {\bibinfo {author} {\bibfnamefont {A.}~\bibnamefont
  {Kubica}}, \bibinfo {author} {\bibfnamefont {B.}~\bibnamefont {Yoshida}}, \
  and\ \bibinfo {author} {\bibfnamefont {F.}~\bibnamefont {Pastawski}},\
  }\href@noop {} {\bibfield  {journal} {\bibinfo  {journal} {New. J. Phys.}\
  }\textbf {\bibinfo {volume} {17}},\ \bibinfo {pages} {083026} (\bibinfo
  {year} {2015})}\BibitemShut {NoStop}%
\bibitem [{\citenamefont {Levin}\ and\ \citenamefont {Gu}(2012)}]{Levin12}%
  \BibitemOpen
  \bibfield  {author} {\bibinfo {author} {\bibfnamefont {M.}~\bibnamefont
  {Levin}}\ and\ \bibinfo {author} {\bibfnamefont {Z.-C.}\ \bibnamefont {Gu}},\
  }\href@noop {} {\bibfield  {journal} {\bibinfo  {journal} {Phys. Rev. B}\
  }\textbf {\bibinfo {volume} {86}},\ \bibinfo {pages} {115109} (\bibinfo
  {year} {2012})}\BibitemShut {NoStop}%
\bibitem [{\citenamefont {Beigi}\ \emph {et~al.}(2011)\citenamefont {Beigi},
  \citenamefont {Shor},\ and\ \citenamefont {Whalen}}]{Beigi11}%
  \BibitemOpen
  \bibfield  {author} {\bibinfo {author} {\bibfnamefont {S.}~\bibnamefont
  {Beigi}}, \bibinfo {author} {\bibfnamefont {P.~W.}\ \bibnamefont {Shor}}, \
  and\ \bibinfo {author} {\bibfnamefont {D.}~\bibnamefont {Whalen}},\
  }\bibfield  {booktitle} {\emph {\bibinfo {booktitle} {Communications in
  Mathematical Physics}},\ }\href {\doibase 10.1007/s00220-011-1294-x}
  {\bibfield  {journal} {\bibinfo  {journal} {Commun. Math. Phys.}\ }\textbf
  {\bibinfo {volume} {306}},\ \bibinfo {pages} {663} (\bibinfo {year}
  {2011})}\BibitemShut {NoStop}%
\bibitem [{\citenamefont {Yoshida}(2015)}]{Beni15c}%
  \BibitemOpen
  \bibfield  {author} {\bibinfo {author} {\bibfnamefont {B.}~\bibnamefont
  {Yoshida}},\ }\href@noop {} {\enquote {\bibinfo {title} {Gapped boundaries,
  group cohomology and fault-tolerant logical gates},}\ }\bibinfo
  {howpublished} {arXiv:1509.03626} (\bibinfo {year} {2015})\BibitemShut
  {NoStop}%
\end{thebibliography}

\end{document}